\documentclass[10pt]{article}
\usepackage{geometry}
\geometry{twoside=true,
 includeheadfoot, head=13pt, foot=2pc,
 paperwidth=6.75in, paperheight=10in,
 top=58pt, bottom=44pt, inner=46pt, outer=46pt,
 marginparwidth=2pc,heightrounded
}

\usepackage[ruled]{algorithm2e} 

\usepackage{amsmath,amssymb,amsfonts}
\usepackage{multicol}
\usepackage{cleveref}
\usepackage{graphicx}
\usepackage{textcomp}
\usepackage{xcolor}
\usepackage{theorem}
\usepackage{tabularx}
\usepackage{mathrsfs}
\usepackage{url}
\usepackage{subfigure}
\usepackage{tabularx}
\usepackage{float}
\usepackage{enumitem}
\usepackage{setspace}
\usepackage{booktabs}
\usepackage{threeparttable}
\usepackage{thmtools}
\usepackage{authblk}
\usepackage{caption}
\usepackage[misc]{ifsym}
\usepackage[numbers,sort&compress]{natbib}

\newtheorem{theorem}{Theorem}[section]
\newtheorem{assum}{Assumption}[section]

\newtheorem{lemma}{Lemma}[section]

\newtheorem{example}{Example}

\newcommand{\tx}{\textsf{tx}}
\newcommand{\set}[1]{\left\{#1\right\}}

\newtheorem{rema}{Remark}[section]

\newtheorem{define}{Definition}[section]

\newenvironment{proof}{{\indent\it Proof:\quad}}{\hfill $\square$\par}

\setcitestyle{authoryear}

\title{A Framework of Transaction Packaging in High-throughput Blockchains}
\author[$\textsuperscript{1}$]{Yuxuan Lu}
\author[$\textsuperscript{1}$]{Qian Qi}
\author[$\textsuperscript{2}$]{Xi Chen}

\affil[$\textsuperscript{1}$]{Center on Frontiers of Computing Studies, Peking University}
\affil[$\textsuperscript{2}$]{Leonard N. Stern School of Business, New York University}
\affil[ ]{\texttt{yx\_lu@pku.edu.cn,qiqian@pku.edu.cn,xc13@stern.nyu.edu}}

\date{}

\begin{document}
\maketitle
\begin{abstract}
We develop a model of coordination and allocation of decentralized multi-sided markets, in which our theoretical analysis is promisingly optimizing the decentralized transaction packaging process at high-throughput blockchains or Web 3.0 platforms. In contrast to the stylized centralized platform, the decentralized platform is powered by blockchain technology, which allows for secure and transparent Peer-to-Peer transactions among users. Traditional single-chain-based blockchains suffer from the well-known \emph{blockchain trilemma}. Beyond the single-chain-based scheme, decentralized high-throughput blockchains adopt parallel protocols to reconcile the blockchain trilemma, implementing any tasking and desired allocation. However, unneglectable network latency may induce partial observability, resulting in incoordination and misallocation issues for the decentralized transaction packaging process at the current high-throughput blockchain protocols.

To address this problem, we consider a strategic coordination mechanism for the decentralized transaction packaging process by using a game-theoretic approach. Under a tractable two-period model, we find a Bayesian Nash equilibrium of the miner's strategic transaction packaging under partial observability. Along with novel algorithms for computing equilibrium payoffs, we show that the decentralized platform can achieve an efficient and stable market outcome. The model also highlights that the proposed mechanism can endogenously offer a base fee per gas without any restructuration of the initial blockchain transaction fee mechanism. The theoretical results that underlie the algorithms also imply bounds on the computational complexity of equilibrium payoffs.
 
\end{abstract}
\clearpage

\section{Introduction}

\indent The goal of this paper is to develop a general theory of coordination and allocation in a decentralized multi-sided platform with heterogeneous users, which can be applied to the high-throughput blockchain platform or Web 3.0 platform design. To this end, we focus on the high-throughput blockchain in the present paper and propose a novel strategic transaction packaging mechanism based on the game-theoretic approach. The mechanism can also be applied to the incentive design, consensus mechanism design, and resource allocation in the decentralized multi-sided market.

\indent Since \cite{nakamoto2008bitcoin}, the rapid development of blockchain protocols put forth a new paradigm of the business ecosystem (says, Web 3.0 platform) and shed new light on the prospect of both computer science and platform economies. Most recently, high-throughput blockchain systems' rise in popularity has been inspired by the so-called \emph{blockchain trilemma}\footnote{The traditional single-chain-based blockchain protocols can only handle a limited number of transactions per second. For instance, the Bitcoin network can process around seven transactions per second. Suppose blockchain technology is to be adopted globally, primarily supporting the latest Web 3.0 platform. In that case, it should be able to handle enormous amounts of data and at faster transaction speeds so that more people can use the network without it becoming too slow or expensive to use.

\indent However, the fundamental design of many workhorse decentralized networks means that increasing scalability tends to weaken decentralization or security. This is what’s known as the blockchain trilemma. This term was popularized by \emph{Vitalik Buterin}, the co-founder of Ethereum. Thus, users must be aware of three elements desirable in a blockchain: decentralization, security, and scalability. Namely, the blockchain trilemma can also be recognized as the problem that it’s hard for single-chain blockchain protocols to achieve optimal levels of all three properties simultaneously. Increasing one usually leads to a weakening of another.} for the conventional single-chain-based blockchain protocols (e.g., Bitcoin (\cite{nakamoto2008bitcoin}) and Ethereum 1.0 (\cite{buterin2014next})). The main building blocks of the blockchain trilemma include

\begin{itemize}
    \item \textbf{Decentralization} offers the foundation of what is known as Web 3.0. In contrast to the Web 2.0 protocols that are full of networks controlled by centralized platforms, Web 3.0 is an internet where decentralized blockchain technology lets users control their data and online lives. 
    \item \textbf{Security} for the decentralized systems, especially the blockchain protocols, can also be recognized as solving the Byzantine Fault Tolerance problem (henceforth, BFT). Take Bitcoin as an example of decentralized blockchain security, the Bitcoin network adopts a combination of cryptography and a network consensus mechanism called Proof of Work (PoW), yet plays into the issue of scalability, as the PoW mechanism is secure but relatively slow due to the single-chain structure being preserved.
    \item \textbf{Scalability} refers to the goal of designing a blockchain that can support faster transaction speed or much more transactions per unit of time. Scale is notable if blockchain technology is to serve more comprehensive Web 3.0 platforms and possibly billions of users in the coming Web 3.0 era. This is where many high-throughput blockchain protocols still need to improve in retaining decentralization and security.
\end{itemize}

\indent Namely, the core spirit of high-throughput blockchain protocols aims to reconcile the widely existing blockchain trilemma problem that appears in the traditional single-chain-based protocols, thereby supporting the foundation of the Web 3.0 platform in a  more scalable and more practicable way. Specifically, the goal of the high-throughput blockchain systems is to provide a higher-performance distributed-consensus mechanism to improve on-chain transaction speed by replacing the single-chain structure yet hopefully retaining the decentralization and security features in the traditional single-chain-based protocols. Examples include Conflux (see \cite{li2020decentralized}), Algorand (see \cite{gilad2017algorand} and \cite{chen2019algorand}), and Cardano (see \cite{kiayias2017ouroboros} and \cite{kiayias2018ouroboros}), among many others. These high-throughput blockchain systems feature the block generation process (driven by network validators, henceforth, miners) in which the nodes rely on parallel communication to maximize their efficiency of block generation, which we also refer to as the parallel BFT consensus.\footnote{Many blockchain systems also adopt parallel BFT consensus in improving transaction speed, these include Solana (see \cite{yakovenko2018solana}), Fantom (see \cite{choi2018fantom}), and Bitgert Chain.} The challenge is to achieve consensus in such a high throughput scenario, ensuring that all nodes keep the same record (coordination) in their respective ledgers.

\indent Ideally, the existing representative high-throughput blockchains improve the on-chain transaction throughput via parallel BFT consensus\footnote{Including the DAG-based BFT consensus, and most recently see \cite{spiegelman2022bullshark}}. However, under the high network load scenario, high throughput can lead to severe coordination and resource allocation issues due to \emph{partial observability} driven by the parallel BFT-based high throughput consensus with network latency-derived user heterogeneity. These issues can result in limited usage of block capacity, thereby inducing a dramatic loss of network throughput. Now we describe the key mechanism of partial observability as follows:

\begin{itemize}
    \item \textbf{Unobservable Blocks} refers to a partial observability that appears in the high-throughput blockchain, especially when the block generation process is under the sub-second criterion. This is due to the fact that the block generation process is driven by the miners in the network, who are limited by the network latency and the block capacity upper bound. In this scenario, each miner cannot precisely observe which orders in the mempool have explicitly been put into a specific newly generated block, given that many parallel blocks can be generated simultaneously. This phenomenon can also be described within two periods and detailedly illustrated in \cref{fig:model}:
    \begin{itemize}
    \item Date $0$, traders submit orders to the mempool with a gas fee\footnote{A gas fee is a blockchain transaction fee paid by traders on the blockchain. This gas fee is automatically paid to miners for their services to the blockchain via smart contract. Without the gas fees, anyone would have no incentive to stake their thereby gain and help secure the network.}. 

    \item Date $1$, the miners generate blocks and pick some orders from the mempool, putting them into their blocks by maximizing their profit (gas fees). Given that miners cannot directly observe the orders in the block generated by other miners, some orders submitted to the mempool on Date $0$ may repeatedly be included in multiple blocks on Date $1$.
    \end{itemize}
    Namely, the miners can only observe the orders in the mempool on Date $0$, yet cannot observe which orders have been included in the blocks on Date $1$. This is the main source of partial observability. That is, in this case, the same order in the mempool can be repeatedly put into multiple blocks due to the effect of unobservable blocks.

    \item \textbf{Miscoordination} is the core economic consequence of the unobservable blocks effect. This miscoordination arises from miners' inability to observe which orders have been included in the blocks on Date $1$ due to the partial observability of the transaction packaging from other miners. As a result, miners have to make decisions on which orders to include in the blocks on Date $1$ without detecting which orders are already included in the blocks on Date $1$. This can lead to miners competing with each other to include the same orders in the blocks on Date $1$, resulting in repeated transaction packaging.
    
    \item \textbf{Misallocation} is another crucial economic consequence of the unobservable blocks effect. This misallocation issue arises from the miscoordination issues driven by partial observability. For instance, the miners in the platform need to allocate the resources (e.g., computing power) to generate blocks on Date $1$. However, due to the partial observability, miners cannot observe which orders have been included in the blocks on Date $1$ and hence cannot precisely allocate the block capacity to pursue the most profitable transactions. This can lead to the miners allocating too many resources to compete for the transactions with higher bidding, resulting in a dramatic loss of network throughput.

\end{itemize}

\begin{figure}[htbp]
\begin{centering}
\includegraphics[scale=.55]{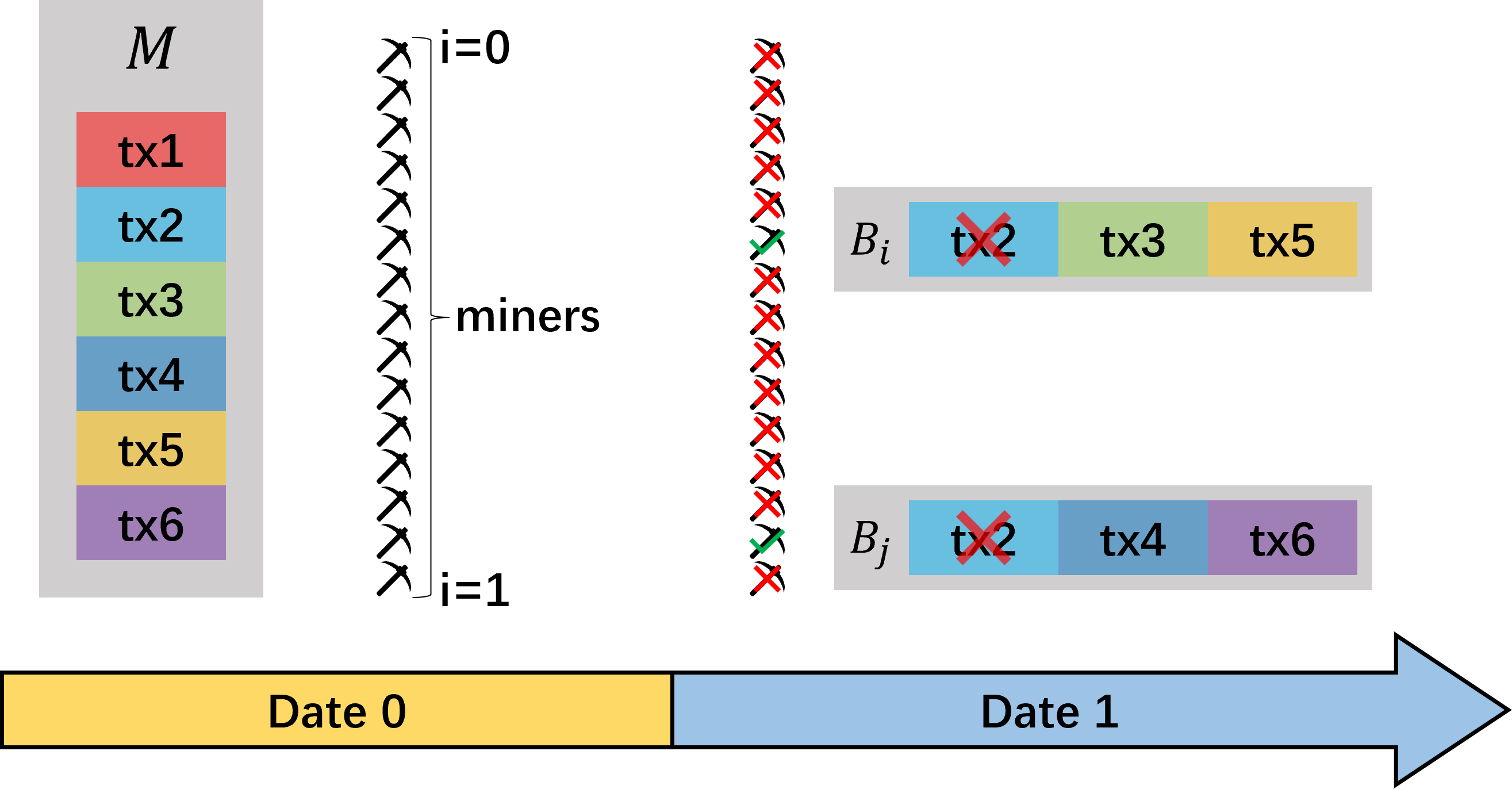}
\caption{Model description}
\label{fig:model}
\end{centering}
\end{figure}

\indent Again, the main purpose of the current paper is to develop a novel transaction packaging mechanism based on the game-theoretic approach to address the coordination and resource allocation issues in the high-throughput blockchain platform. Specifically, the proposed mechanism is designed to coordinate the users (miners and traders) in the platform, and to allocate the resources among the users in the system, to improve the system's throughput. The proposed mechanism is based on the game-theoretic approach, which allows the miners to package the transactions to maximize their profits strategically. 

\indent We emphasize the importance of the effect of ``unobservable blocks'' for the miners at high-throughput blockchains, especially when the block generation speed is under the sub-second level. Under partial observability, we provide the Nash equilibrium for the strategic transaction packaging game and calculate the equilibrium via novel algorithms. We also show that the proposed mechanism can improve the system throughput and reduce the transaction latency, compared to the existing transaction packaging strategies.

\indent Remarkably, we propose a mechanism to enable miners to coordinate their strategies without any external information acquisition. The proposed mechanism is based on a distributed ledger and relies on a reward system to incentivize miners to package transactions optimally. The reward system is designed to provide miners with a financial incentive to package transactions in an efficient manner. The reward system is designed to be coordinative and insightful, providing miners with the necessary strategies to make decisions about their transaction packaging under partial observability. Finally, the proposed mechanism allows the miners to collaborate in a decentralized manner and offers a way to scale the system to accommodate more transactions.

\indent We believe that our proposed model can benefit decentralized multi-sided markets in a number of ways. First, our model provides a theoretical framework for the efficient coordination and allocation of resources. By incentivizing users to form a cooperative policy, the model allows the multi-sided market to benefit from the blockchain's underlying features, including privacy, security, and transparency. Furthermore, our proposed model can increase transaction throughput, which is an important factor for the scalability of decentralized multi-sided markets. Finally, our model can be used to optimize the decentralized transaction packaging process, allowing for more efficient resource allocation and coordination.

\indent Overall, as the various blockchain protocols can engage in distinct service requirements as well as different characteristics and architectures, so far, a general framework for analyzing the transaction packaging mechanism of high throughput blockchain protocols has yet to be constructed particularly. Our proposed model provides a generalized solution to the coordination and allocation of decentralized multi-sided markets. We believe that our model can be used to improve the efficiency and scalability of various multi-sided platforms, and can be applied to a variety of blockchain-based applications.

\subsection{Literature Review}

\indent This paper sits at the confluence of several strands of literature, combining the multi-sided market with decentralized platforms and high throughput blockchains.

\indent Our paper relates to the literature on the platform economy and multi-sided markets. Note that large numbers of studies on the multi-sided market have been proposed, which starts with the seminal works by \cite{rochet2003platform}, \cite{rochet2006two}, \cite{caillaud2003chicken}, \cite{armstrong2006competition} for the two-sided market, and most recent contributions to this strand of literature include \cite{weyl2010price}, \cite{jullien2019information}, \cite{tan2021effects}, \cite{teh2022platform}, among many others. Most of the existing literature on multi-sided platforms has focused on both centralized monopolistic and duopolistic platforms. We focus on decentralized platform coordination and resource allocation issues in a model of heterogeneous users.

\indent Our study is related to recent literature on high throughput blockchain protocols. For instance, Conflux networks (see \cite{li2020decentralized}) adopt a Tree-Graph where each block references some other blocks, one of which being its parent block. Algorand (see \cite{gilad2017algorand}, \cite{chen2019algorand}) is another blockchain platform that is known for its high throughput. Specifically, the blockchain uses the Proof-of-Stake (PoS) mechanism, where validators are randomly picked and rewarded for their work. Cardano (see \cite{kiayias2017ouroboros} and \cite{kiayias2018ouroboros}) is a blockchain platform that focuses on community participation and security, in which the transactions are confirmed in just a few seconds. Solana (see \cite{yakovenko2018solana}) utilizes a proof-of-stake (PoS) consensus mechanism to gain popularity for offering high throughput transactions at low gas fees. Fantom (see \cite{choi2018fantom}) employs a directed acyclic graph (DAG) consensus mechanism to achieve high scalability. However, all these high-throughput blockchain protocols confront coordination and allocation issues as they no longer consider the strategic transaction packaging mechanism. We provide a theoretical analysis of this coordination and allocation for the transaction packaging game, in which the theory offers a set of equilibrium strategies for optimizing the coordination and efficiency of the existing high-throughput blockchain protocols.

\indent This paper also contributes to a large growing literature on Defi, FinTech, and blockchain in finance. These include \cite{cong2019blockchain}, \cite{biais2019blockchain}, \cite{easley2019mining}, \cite{schilling2019some}, \cite{chiu2019blockchain}, \cite{saleh2021blockchain}, \cite{cong2021tokenomics}, \cite{cong2021decentralized}, \cite{gryglewicz2021optimal}, \cite{prat2021equilibrium}, and \cite{cong2022token}, \cite{sockin2022decentralization}, among many others. Our analysis rests on a view of the decentralized platform's strategic transaction packaging game for coordination and allocation. The theoretical results will hopefully support the foundation for constructing the highly efficient Web 3.0 platform and the decentralized exchange (DEX).

\indent The remainder of this paper is organized as follows. \Cref{sec:model} introduces our model and assumptions. \Cref{sec:solution} characterizes the model solution of the two-period blockchain model. In \Cref{sec:discuss}, we describe the applications and extensions of our theory on an applicable blockchain protocol and highlight that the proposed mechanism can endogenously offer a base fee per gas without any restructuration of the blockchain transaction fee mechanism. \Cref{sec:conclu} concludes. The omitted proofs and algorithm details are presented in Appendix.

\section{Framework}\label{sec:model}

\indent We now formalize the consensus mechanism of our high-throughput blockchain protocol via a tractable two-period model. Specifically, we characterize the high-throughput blockchain in the sense of the conventional proof-of-work (PoW) and proof-of-stake (PoS) consensus mechanism. In particular, our framework can also be recognized as a multiparty asynchronous blockchain protocol.\footnote{Note that the multiparty asynchronous paradigm is one of the crucial patterns widely adopted in high-throughput blockchain systems so far, see \cite{li2020decentralized} for a representative case, among many others.} In a standard single-chain-based blockchain protocol, the specific time for the block generation process is crucial as the blockchain protocols must sort the generated blocks by timestamp to prevent double-spending problems. However, the standard scheme is no longer applicable to the high-throughput blockchain protocol as the blocks are multiparty asynchronous and generated simultaneously within the same timestamp. We formalize this specification of the existing high-throughput blockchain protocols, thereby analyzing several economic consequences in this section.

\subsection{The Model}

In conventional blockchains, miners may not try to replay the transactions in the previous blocks\footnote{Doing this is able to cause the block to fail the legality check. If a miner does this, she will be considered to launch an attack.}. However, miners cannot observe the whole history due to the network delay in high-throughput blockchains. Instead, they are unaware of blocks generated over a period of time (i.e., network latency).
As a result, it is natural for the same transaction to appear in different blocks. When a transaction appears in multiple blocks, the blockchain will execute the transaction in the earliest block and ignore its existence in other blocks.

The platform comprises two types of users: traders and miners. In contrast to the centralized platform\footnote{For example, the Web 2.0 platform came as a business revolution in the computer industry in the early 2000s. Platforms such as Amazon, Meta (Facebook), Twitter, YouTube, and others emphasize the flexibility of access, interaction, mobility, and participation. However, the centralization of Web 2.0 makes it easy for security issues, data gathering for malicious purposes, privacy intrusion, and cost as well. Otherwise, the centralized Web 2.0 platform is monopolistic, exacerbating wealth inequality and market manipulation. The central idea behind Web 3.0 is against this monopoly at the Web 2.0 platforms. With the large growing adoption of decentralized networks, blockchain is contributing as the backbone of Web 3.0, which led to a transition of the world wide web to a novel decentralized phase of use and service development.}, before any trader's submitted unpacked transaction $\tx\in M$ becomes a packaged legitimate transaction on the blockchain (or decentralized platform), it should stay in the mempool $M$, waiting for one of some potential decentralized miners to package it into a block. We now formalize these basic concepts in Definition \ref{def:2.1} to remove any ambiguity.

\begin{define}\label{def:2.1}
\indent 
\begin{itemize}

    \item\textbf{Trader} broadcasts transaction requests and hires a potential miner $i$ to work for packaging her orders as legitimate transactions $\tx$ into a specific block $B_{i}$, in which orders refer to the transactions in $\tau=0$.

    \item \textbf{Mempool} $M=\left\{1,2,...,m\right\}$ is the set of orders maintained by each miner. It acts as a waiting room for transactions that have not yet been included in a block. Note that after a transaction is packaged into a block, it should be removed from the mempool after a network latency.

    \item \textbf{Transaction} $\tx$ is an item in mempool $M$, representing a order submitted by a trader. When an order is broadcast, it is sent from a trader to all miners through the Peer-to-Peer network for packaging as a legitimate transaction. 

    \item \textbf{Gas price} is the bid per unit of block capacity occupied. We use the function $v(\cdot):M\rightarrow \mathbb{R}^+$ to represent the gas price bidding of transactions. For example, transaction $\tx$ has gas price $v(\tx)$. By defining the function $s(\cdot):M\rightarrow \mathbb{R}^+$ additionally as the capacity of the block occupied by transactions, the gas fee that transaction $\tx$ needs to pay to the miner is $s(\tx)\times v(\tx)$.

    \item \textbf{Miner} $i$ is responsible for generating blocks and packing the orders from mempool into her block $B_{i}$. Specifically, the miner is motivated chiefly by gas fees, so the orders with a higher gas price attached are those most likely to be packaged first. The blockchain platform comprises a continuum of miners with a measure of one unit, indexed by $i\in[0,1]$. 
  
    \item \textbf{Block} $B_{i}$ is a set of transactions to be packed by miner $i$. Each block has a capacity limit of $k$. Formally, $k$ is the sum of the maximum capacity used of all transactions that are accommodated in a block, that is, $\sum_{\tx\in B_i}s(\tx)\leq k$ for all $i\in [0,1]$.

\end{itemize}
\end{define}

To obtain a tractable model for theoretical analysis, we propose some assumptions of the blockchain protocols in practice. Specifically, we place additional constraints on network latency and block capacity usage of transactions.

\begin{assum}[Network latency]
Network latency is the time difference between when a block is broadcast and when miners receive it. In this paper, we simplify the effect of network latency and assume that all miners have the same network latency\footnote{This assumption can be relaxed to that the network latency of most miners is the same.}. We use a dimensionless parameter $\lambda$ to describe the network latency. The parameter $\lambda$ is defined as the number of blocks expected to be generated during the time interval of network latency. Additionally, we assume that miners cannot obtain information for any blocks generated within the range of their network latency.
\label{assum:network}
\end{assum}

\begin{assum}[Block capacity usage of transactions]
In the theoretical analysis, it is assumed that all transactions use fixed block capacity, i.e., $s(\tx)=1$ for each transaction $\tx$. As the total block capacity is $k$, each block can contain no more than $k$ transactions. In real-world blockchains, different transactions sometimes use various block capacities, which can also change. If these factors are taken into account in the model, at least the knapsack problem will be introduced. In \Cref{sec:realworld}, we will discuss how to extend our model solution to blockchains in practice.
\label{assum:capacity}
\end{assum}

The model lasts two dates $\tau\in\{0,1\}$. At date $0$, the orders arrive, and miners select the orders and hopefully package the selected orders into the coming block on their own. At date $1$, all blocks generate, and orders are becoming legitimate transactions via block packaging. The concrete timeline of the model is described in \Cref{fig:model}. 

\paragraph{Date $0$: orders arrive and orders selection}
At date $0$, miner $i$ maintains the public mempool $M$, which holds all submitted orders yet is still available for packaging. In the meanwhile, she selects some orders from mempool into her (not yet mined) block $B_{i}$ and works for mining her block $B_{i}$.

\paragraph{Date $1$: block generation}
At date $1$, a newly generated set $\Gamma$ comprises $\gamma$ random miners. The generation rule of the random miners follows a uniform distribution, in which each miner $i\in\Gamma$ can mine a block herself, and these $\gamma$ blocks can be appended to the blockchain. Note that $\gamma$, the number of miners that mine their blocks, is sampled by a Poisson distribution $\mathrm{Poisson}(\lambda)$. The reason that we choose $\mathrm{Poisson}(\lambda)$ is that a miner has no information about the blocks mined during her network latency in reality, and the distribution of the number of block mined in her network latency follows $\mathrm{Poisson}(\lambda)$ (see Assumption \ref{assum:network}).

\subsection{Strategy Space}\label{sec:strategy}
The miner's strategy space $\mathcal{S}$ is a subset of size no more than $k$ of all transactions in the current mempool, representing the transactions that are selected into the block. When $|M|\geq k$, we can ignore those strategies that contain less than $k$ transactions\footnote{Note that $|X|$ is the cardinality of set $X$.}, because they are clearly dominated. When $|M|<k$, the optimal strategy is obviously to package all transactions in $M$. Therefore, by taking $|M|\geq k$ as an underlying assumption, we only need to consider the strategy of packaging exactly $k$ transactions. Formally,
\begin{equation}
\mathcal{S}=\left\{D\big|D\subset M, |D|=k\right\}.
\end{equation}
For mix strategies, say $\Delta\mathcal{S}$ as the set of possible mixed strategies, i.e., the set of all probability distributions over $\mathcal{S}$. For each mixed strategy $\sigma$, we define $p^\sigma(D)$ as the probability that a mixed strategy $\sigma\in\Delta\mathcal{S}$ selects the set of transactions $B$. We additionally overload $p^\sigma(\tx)$ as the \emph{marginal probability} that transaction $\tx$ is selected. Formally, we have
\begin{equation}\label{eq:marginal}
p^\sigma(\tx)=\sum_{D\subset \mathcal{S}} p^\sigma(D)\times \mathbf{1}\left[\tx\in D\right],
\end{equation}
where $\mathbf{1}\left[\tx\in D\right]$ is a indicator function of that set $D$ includes transaction $\tx$, in which $\mathbf{1}\left[\tx\in D\right]=1$ represents $\tx\in D$, and otherwise $\tx\notin D$.

\subsection{Utility}

In reality, the utility of miners can be mainly divided into three parts: static block reward, gas fee, and miner extractable value (MEV).

\begin{itemize}
    \item \textbf{Static block reward} refers to the reward given by the blockchain consensus to the miners of the block. The value of static block reward may be related to factors such as the number of uncle blocks reported, but are not related to which transactions are included in the block. 
    \item \textbf{Gas fee} refers to the fee paid to the block creator by the transactions included in the block. Each transaction $\tx$ will specify a gas price $v(\tx)$. When it is packed into a block, the trader that submits the transaction will automatically pay a gas fee of $v(\tx)\times s(\tx)$, where $s(\tx)$ is the amount of block capacity usage of the transaction. In high-throughput blockchains, the same transaction may appear in more than one block. At this time, the transaction will be executed on the earliest block, and the transaction fee will be paid to the creator of this block.
    \item \textbf{Miner extractable value (MEV)} refers to the potential profit that can be extracted by miners through specific actions such as front-running, censorship, and reordering the transactions~\cite{daian2020flash}. Miner extractable value can be a significant source of revenue for miners~\cite{qin2022quantifying}.
\end{itemize}

\begin{assum}\label{assum:nomev}
We suppose that the miners can only obtain static block reward and gas fee to remove any impact of the miner extractable value (MEV) in the benchmark case, in which the case includes MEV can be added as a formal extension of our benchmark model.
\end{assum}

\begin{rema}
Note that we can ignore the static block reward as it is fixed and not correlated with the transactions packed. As a result, we define the utility of the miner $i$ who mined block $B_i$ as the expected gas fee received.   
\end{rema}

Recall that in high-throughput blockchains, when the same transaction appears in multiple blocks, it will be executed on the earliest block, and the transaction fee will be paid to the creator of this block. Therefore, a block $B_i$ can obtain the transaction fee of transaction $\tx$ if and only if none of the blocks in the network latency contains $\tx$, i.e., it needs to compete with $\mathrm{Poisson}(\lambda)$ blocks. In the model, conditioning on miner $i$ mines a block, the number of other blocks still follows $\mathrm{Poisson}(\lambda)$. Therefore, it is natural to define that block $B_i$ can receive transaction fees of transaction $\tx\in B_i$ if and only if $\tx$ does not appear in other blocks.

If a miner does not mine a block, its utility is $0$ surely. Formally, conditioning on $i\in \Gamma$, i.e. miner $i$ successfully mines a block, the utility $u_i$ is
\begin{equation}
u_i(\sigma_i,\sigma_{-i})=\mathrm{E}_{B_i\sim \sigma_i,B_{-i}\sim \sigma_{-i}}\left[\sum_{\tx \in B_i} v(\tx)\times \Pr\left[\nexists j\in\Gamma\text{ s.t. }\tx\in B_j\cap j\neq i\right]\right],
\end{equation}
where $\sigma_{-i}$ denotes the mixed strategies for all miners except the miner $i$. Notice that the term $\mathbb{\Pr}\left[\nexists j\in\Gamma\text{ s.t. }\tx\in B_j\cap j\neq i\right]$ measures the probability for a transaction $\tx$ has only been packaged in the unique block $B_{i}$. By applying the marginal probability we mentioned in \Cref{sec:strategy}, we can simplify the utility as the expected revenue that a miner gains conditioning she successfully mines her block:
\begin{equation}
u_i(\sigma_i,\sigma_{-i}) = \sum_{\tx\in M}p^{\sigma_i}(\tx)v(\tx)\times\Pr_{B_{-i}\sim \sigma_{-i}}\left[\nexists j\in\Gamma\text{ s.t. }\tx\in B_j\cap j\neq i\right].
\label{def:utility}
\end{equation}

\section{Model solution}\label{sec:solution}

In this section, we propose a Nash equilibrium of the game generated by the model in \Cref{sec:model}. The Nash equilibrium proposed can be embedded in the mining protocol of the blockchain. In order to encode the equilibrium as the mining protocol of the miners, it is required that the mixed strategies of all miners are the same. Otherwise, miners have to maintain a consensus on their labels. Fortunately, the Nash equilibrium has the same mixed strategy $\sigma^*$ for all miners.

\begin{define}[Equilibrium strategy] 
The set $\Delta \mathcal{S}$ is all the mixed strategies of each miner. Let $\sigma^*\in \Delta\mathcal{S}$ be a mixed strategy. The mixed strategy $\sigma^*$ is an equilibrium strategy if the strategy profile that every miner chooses $\sigma^*$ as her strategy meets a Nash equilibrium. Formally, the mixed strategy $\sigma^*$ is a equilibrium strategy if
\[
\forall i\in [0,1]\forall \sigma_i\in \Delta \mathcal{S}, u_i(\sigma_i,\sigma^*_{-i})\leq u_i(\sigma^*,\sigma^*_{-i}),
\] i.e., the utility maximization problem
\[
\forall i\in [0,1]\sigma^*\in\mathrm{arg}\max_{\sigma_i\in \Delta \mathcal{S}} u_i(\sigma_i,\sigma^*_{-i}),
\]
where $\sigma^*_{-i}$ is the strategy profile of all the miners except $i$ choosing mixed strategy $\sigma^*$.
\end{define}

\begin{theorem}[Main theorem]\label{thm:main}
For mempool $M$, there exists an equilibrium strategy $\sigma^*$, i.e., every miner applying $\sigma^*$ meets a Nash equilibrium. Moreover, the equilibrium strategy $\sigma$ can be obtained in time complexity $O(|M|\log |M|)$.
\end{theorem}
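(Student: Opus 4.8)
The plan is to use the continuum of miners to linearize each miner's payoff, collapse the symmetric equilibrium to a one–dimensional threshold equation, and read off the complexity from a single sort. First I would compute the survival probability of a transaction. Suppose every competing miner plays a common strategy $\sigma^*$ with marginals $q(\tx):=p^{\sigma^*}(\tx)$. Conditioning on miner $i$ mining a block, the number of other blocks is $\mathrm{Poisson}(\lambda)$ (as established above), and each such block independently contains $\tx$ with probability $q(\tx)$. Poisson thinning then gives that the number of other blocks containing $\tx$ is $\mathrm{Poisson}(\lambda q(\tx))$, so $\Pr[\nexists j\in\Gamma\text{ s.t. }\tx\in B_j\cap j\neq i]=e^{-\lambda q(\tx)}$. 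Substituting into \eqref{def:utility}, the payoff to a single (measure-zero) miner deviating to $\sigma_i$ becomes linear in her own marginals,
\[
u_i(\sigma_i,\sigma^*_{-i})=\sum_{\tx\in M}p^{\sigma_i}(\tx)\,v(\tx)\,e^{-\lambda q(\tx)} .
\]
The crucial point, and the reason the continuum matters, is that her deviation does not perturb $q$, so the weights $w(\tx):=v(\tx)e^{-\lambda q(\tx)}$ are held fixed while she optimizes.

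Next I would characterize best responses. Since $u_i$ depends on $\sigma_i$ only through its marginal vector, and the set of attainable marginals of distributions over $k$-subsets of $M$ is exactly the polytope $\set{p\in[0,1]^M:\sum_{\tx}p(\tx)=k}$ (the base polytope of the rank-$k$ uniform matroid, every point of which is realizable as an honest mixed strategy in $\Delta\mathcal{S}$), the best response is the linear program of placing unit mass on the $k$ transactions of largest $w(\tx)$. For a symmetric equilibrium I need $q$ itself to solve this program against the weights it induces. By the complementary-slackness/threshold optimality condition for the LP there is a scalar $c>0$ with $w(\tx)>c\Rightarrow q(\tx)=1$, $w(\tx)<c\Rightarrow q(\tx)=0$, and $w(\tx)=c$ on the interior; inverting $v(\tx)e^{-\lambda q(\tx)}=c$ yields the closed form
\[
q(\tx)=\min\!\Big(1,\ \max\!\big(0,\ \tfrac1\lambda\ln(v(\tx)/c)\big)\Big).
\]
I would then pin down $c$ by the normalization $\sum_{\tx}q(\tx)=k$: writing $F(c)=\sum_{\tx}q(\tx;c)$, each summand is continuous and non-increasing in $c$ with $F(0^+)=|M|\ge k$ and $F(\infty)=0$, so the intermediate value theorem supplies a solution $c^*$, and strict monotonicity on the relevant range gives uniqueness. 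Verification of equilibrium is then immediate: at $q(\cdot;c^*)$ the induced weights satisfy exactly the threshold conditions above, so $q$ maximizes the linear objective, no miner can profitably deviate, and any $\sigma^*\in\Delta\mathcal{S}$ realizing these marginals is an equilibrium strategy.

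For the complexity bound I would sort $M$ by $v(\tx)$ in $O(|M|\log|M|)$ time. The breakpoints of $F$ occur at $c=v(\tx)$ and $c=v(\tx)e^{-\lambda}$, and within each interval between consecutive breakpoints $F(c)$ is an explicit affine function of $\ln c$ whose coefficients are maintained by prefix sums; hence locating the interval containing $c^*$ and solving for $c^*$ there costs only $O(|M|)$ additional work, and computing all $q(\tx)$ is a further $O(|M|)$, giving the claimed $O(|M|\log|M|)$.

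The hardest step is the middle one: the equilibrium is a genuine fixed point because the weights $w(\tx)$ depend on the very marginals $q$ being solved for. The idea that unlocks the proof is that strict monotonicity of $w(\tx)=v(\tx)e^{-\lambda q(\tx)}$ in $q(\tx)$ lets me invert the LP optimality condition transaction-by-transaction, collapsing the fixed point to the single scalar equation $F(c)=k$. A secondary subtlety I would state carefully is the realizability claim — that an arbitrary fractional vector in the base polytope is the marginal of a genuine distribution over $k$-element subsets — which I would justify by a water-filling/pipage decomposition.
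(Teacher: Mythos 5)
Your proposal is correct and follows essentially the same route as the paper: the $e^{-\lambda q}$ survival probability, the threshold/complementary-slackness condition, the truncated closed-form marginals, and the scalar normalization equation solved by monotonicity and the intermediate value theorem are precisely \Cref{lemma:eqcond}, \Cref{lemma:weak}, and \Cref{th:algeq}, with your threshold $c$ related to the paper's shift $\hat{x}$ by a monotone change of variable ($\hat{p}(\tx)-\hat{x}=\tfrac{1}{\lambda}\ln\left(v(\tx)/c\right)$). The one step you outsource --- realizing an arbitrary point of the polytope $\set{p\in[0,1]^M:\sum_{\tx\in M}p(\tx)=k}$ as an explicit mixed strategy --- is exactly the paper's \Cref{lemma:marginal}, established by the systematic-sampling construction of \Cref{alg:corresponding}, which is also what keeps the end-to-end complexity at $O(|M|\log|M|)$.
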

In \Cref{sec:proof1} and \Cref{sec:proof2}, we put forward a detailed proof for \Cref{thm:main}. Before starting the formal proof, we first take an overview of the proof sketch. First, we show that for any function $p(\cdot)$ satisfying some basic requirements, \Cref{alg:corresponding} can output a mixed strategy $\sigma$ with marginal probability $p^\sigma(\tx)=p(\tx)$ for all $\tx\in M$. Then, we give a specific function $\hat{p}(\cdot)$ whose corresponding strategy is an equilibrium strategy. In this step, the function $\hat{p}(\cdot)$ does not necessarily satisfy the basic requirements, i.e., it is possible that $\exists \tx\in M,\hat{p}(\tx)\notin [0,1]$. Finally, \Cref{thm:main} carefully modifies the calculated function to meet the basic requirements while maintaining the property of equilibrium. 

\subsection{Mapping Marginal Probability to Mixed Strategy}\label{sec:proof1}

\begin{lemma}[Corresponding strategy of marginal probability]
For any function $p(\cdot):M\rightarrow \mathbb{R}$ satisfying that (1) $\forall \tx\in M, 0\leq p(\tx)\leq 1$; and (2) $\sum_{\tx\in M} p(\tx) = k$, there exists a mixed strategy $\sigma\in\Delta\mathcal{S}$ whose marginal probability $p^\sigma(\tx)=p(\tx)$ for all $\tx\in M$. Moreover, strategy $\sigma$ can be obtained in the time complexity of $O(|M|)$. In the rest of the paper, we call strategy $\sigma$ as the \emph{corresponding strategy} of the function $p(\cdot)$.
\label{lemma:marginal}
\end{lemma}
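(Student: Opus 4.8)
The plan is to give a direct, executable construction — which is presumably what \Cref{alg:corresponding} realizes — rather than to invoke an abstract existence result. The two hypotheses, $0\le p(\tx)\le 1$ and $\sum_{\tx\in M}p(\tx)=k$, say exactly that the vector $(p(\tx))_{\tx\in M}$ lies in the base polytope of the rank-$k$ uniform matroid on $M$, whose vertices are precisely the indicator vectors of the $k$-subsets comprising $\mathcal{S}$; so a suitable $\sigma\in\Delta\mathcal{S}$ exists by Carathéodory. To obtain the $O(|M|)$ bound and an actual sampler, however, I would use \emph{systematic (cumulative-interval) sampling}, which produces the marginals and the fixed cardinality $k$ in one stroke.

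Concretely, fix any ordering $\tx_1,\dots,\tx_m$ of $M$ and form the prefix sums $C_0=0$ and $C_j=\sum_{l\le j}p(\tx_l)$, so that $C_m=k$. These cut $[0,k)$ into consecutive half-open intervals $I_j=[C_{j-1},C_j)$ of length $p(\tx_j)\le 1$. Draw a single offset $U\sim\mathrm{Uniform}[0,1)$ and select $\tx_j$ into the block if and only if one of the unit-spaced probe points $U,\,U+1,\,\dots,\,U+(k-1)$ lands in $I_j$. This rule produces a random set, and hence a distribution $\sigma$ over $\mathcal{S}$, and the correctness reduces to two claims. First, exactly $k$ transactions are chosen: all $k$ probes lie in $[0,k)$ and each lies in exactly one interval since $\{I_j\}$ partitions $[0,k)$; and because every interval has length at most $1$ while the probes are at pairwise distance at least $1$, no interval can capture two probes. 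The probe-to-interval map is therefore injective, so the number of selected transactions equals the number of probes, namely $k$, and $\sigma$ is supported on $\mathcal{S}$. Second, the marginal of $\tx_j$ is correct: working modulo $1$, the event that some probe hits $I_j$ is exactly the event that $U$ falls in an arc of total length $p(\tx_j)$, so $p^\sigma(\tx_j)=p(\tx_j)$; the extreme cases check out, since $p(\tx_j)=0$ gives an empty interval that is never hit and $p(\tx_j)=1$ gives a length-one interval always hit by precisely one probe.

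For the complexity and representation, note that as $U$ sweeps $[0,1)$ the selected set changes only when a probe crosses some breakpoint $C_j$, i.e. at the $m$ fractional parts of $C_1,\dots,C_m$; hence $\sigma$ is supported on at most $m$ distinct $k$-subsets. Computing the prefix-sum array $C_0,\dots,C_m$ takes $O(|M|)$ time and fully specifies the sampler, which yields the claimed bound (enumerating the support explicitly would require sorting the breakpoints and cost $O(|M|\log|M|)$, but the sampler representation itself is linear).

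The main obstacle — and the only genuinely delicate point — is verifying simultaneously that the scheme returns a \emph{fixed} sample size of exactly $k$ and \emph{unbiased} marginals, since both properties hinge on the two hypotheses being used in tandem: $p(\tx)\le 1$ keeps every interval short enough that no probe is double-counted, while $\sum_{\tx}p(\tx)=k$ being an integer matches the number of probes to the desired cardinality. The care lies in the modulo-$1$ wraparound bookkeeping in the marginal computation and in the boundary values $p(\tx)\in\{0,1\}$; once those are handled precisely, the remainder is elementary prefix-sum arithmetic.
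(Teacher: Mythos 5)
Your construction is exactly the paper's own proof of \Cref{lemma:marginal}: \Cref{alg:corresponding} lays the transactions out as consecutive segments of length $p(\tx)$ via prefix sums, draws a single uniform offset $r\in[0,1)$, and selects the segments hit by the unit-spaced probes $r, r+1,\dots,r+k-1$, with the same two verification steps (injectivity from segment lengths $\leq 1$ giving exactly $k$ selections, and the modulo-$1$ arc argument giving the marginals). Your proposal is correct, and if anything slightly more careful than the paper on the boundary cases $p(\tx)\in\{0,1\}$ and on the point that the $O(|M|)$ bound refers to the sampler representation rather than an explicitly sorted enumeration of the support.
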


\begin{algorithm}[h]
    \SetAlgoNoLine
    \KwIn{mempool $M$, marginal probability $p(\cdot)$, block capacity $k$, parameter $\lambda$.}
    \KwResult{mixed strategy $\sigma$.}
    $sum[0]\gets 0$; $dv[0]\gets 0$\;
    \For{$\tx\in 1,2,...,|M|$}
    {
        $sum[\tx]\gets sum[\tx-1]+p(\tx)$\;
        $sum[\tx]\gets sum[\tx]-\lfloor sum[\tx]\rfloor$\tcp*{$\lfloor x\rfloor$ is the largest integer not greater than $x$}
        $dv[\tx]\gets sum[\tx]$\;
    }
    Sort the elements in $dv$ from small to large; $S\gets \textbf{none}$\;
    \For{$j \in 1,2,...,|M|$}
    {
        $p\gets dv[j]-dv[j-1]$\;
        $A \gets \left\{\tx\left|sum[\tx-1] \leq dv[j] < sum[\tx]\right.\right\}$\;
        Add ``select transaction set $A$ with probability $p$'' into $\sigma$\;
    }
    \KwRet{$\sigma$}\;
    \caption{Mapping $p(\cdot)$ to its corresponding strategy}
    \label{alg:corresponding}
\end{algorithm}

\begin{proof}
\Cref{alg:corresponding} shows the method to obtain the mixed strategy $\sigma$. Specifically, the method has three steps. The first step is to treat each transaction as a line segment with a length of $p(\tx)$ and then arrange all the line segments in turn on the positive semi-axis. Next, generate a real number $r$ from $[0,1)$ uniformly at random. The last step is to output that the transactions selected are those line segments covering positions $\set{r+n|n=0,1,...,k-1}$ on the axis. 

There are two facts to prove, the first of which is that this method will select exactly $k$ transactions. This is because each line segment has a length no more than $1$, leading it to only cover at most one position in $\set{r+n|n=0,1,...,k-1}$. The second is that for each transaction $\tx$, the marginal probability $p^\sigma(\tx)=p(\tx)$. We denote the starting position of the line segment $\tx$ on the number axis as $L$. We select the transaction $\tx$ if and only if $L-\lfloor L\rfloor \leq r < L-\lfloor L\rfloor+1$ or $L-\lfloor L\rfloor-1 \leq r < L-\lfloor L\rfloor$. Hence, the marginal probability $p^\sigma(\tx)$, that the range of $r$ selecting $\tx$, is $p(\tx)$.
\end{proof}

\begin{example}\label{eg:corres}
\Cref{fig:corresponding} presents the procedure of \Cref{alg:corresponding} in detail. Specifically, in mempool $M$, there are $7$ transactions, and the block capacity $k=3$. In \Cref{table:corres}, the first column is the transaction id; The second column is the $p(\cdot)$ value of each transaction; The third column shows the covered interval of the segment created by each transaction; The fourth column is the $r$ value that the miner should package each transaction. When randomly selected $r\sim \mathrm{Uniform}(0,1)$ is $0.37$, the miner should select transactions $\set{2,3,5}$ to package.
\end{example}

\begin{figure}[htbp]
\centering
    \begin{minipage}{0.58\linewidth}
    \centering
    \includegraphics[scale=.43]{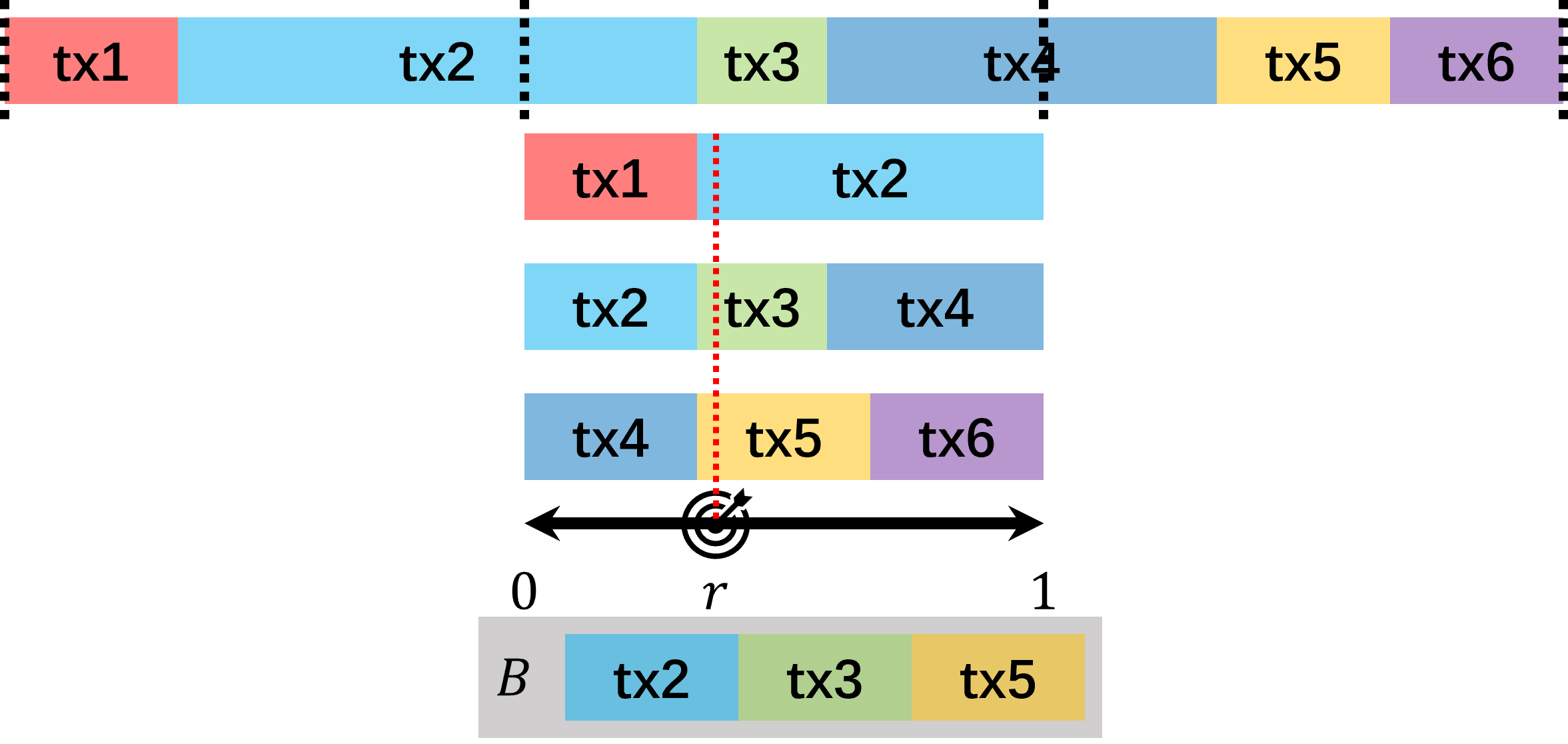}
    \caption{Finding the corresponding strategy of a function}
    \label{fig:corresponding}
    \end{minipage}
\hfill
    \begin{minipage}{0.41\linewidth}
    \centering\setlength{\tabcolsep}{4pt}
    \renewcommand\arraystretch{1.1}
    \begin{tabular}{|c|c|c|c|}
    \hline
    $\tx$ & $p(\tx)$ & interval & $r$ value\\
    \hline\hline
    1 & $ \frac{1}{3}$ & $ \left[0, \frac{1}{3}\right)$ & $ \left[0, \frac{1}{3}\right)$ \\\hline
    2 & $1$ & $ \left[ \frac{1}{3}, \frac{4}{3}\right)$ & $ \left[ \frac{1}{3}, 1\right) \cup \left[ 0,\frac{1}{3}\right) $ \\\hline
    3 & $ \frac{1}{4}$ & $ \left[ \frac{4}{3}, \frac{19}{12}\right)$ & $ \left[ \frac{1}{3}, \frac{7}{12}\right)$ \\\hline
    4 & $ \frac{3}{4}$ & $ \left[ \frac{19}{12}, \frac{7}{3}\right)$ & $ \left[ \frac{7}{12},1\right) \cup \left[0, \frac{1}{3}\right)$ \\\hline
    5 & $ \frac{1}{3}$ & $ \left[ \frac{7}{3}, \frac{8}{3}\right)$ & $ \left[ \frac{1}{3}, \frac{2}{3}\right)$ \\\hline
    6 & $ \frac{1}{3}$ & $ \left[ \frac{8}{3},3\right)$ & $ \left[ \frac{2}{3},1\right)$ \\\hline
    7 & $0$ & $\emptyset$ & $\emptyset$ \\\hline
    \end{tabular}
    \captionof{table}{Data table of \Cref{eg:corres}}
    \label{table:corres}
    \end{minipage}
\end{figure}

\subsection{Finding Equilibrium Strategy}\label{sec:proof2}

\begin{lemma}[Sufficient condition of equilibrium strategy]\label{lemma:eqcond}
A mixed strategy $\sigma^*$ is an equilibrium strategy if
\[
\exists w\in\mathbb{R} \forall \tx\in M,
\begin{cases}
\begin{aligned}
p^{\sigma^*}(\tx)=0 & \implies v(\tx)\exp\left(-\lambda p^{\sigma^*}(\tx)\right)\leq w\\
0<p^{\sigma^*}(\tx)<1 & \implies v(\tx)\exp\left(-\lambda p^{\sigma^*}(\tx)\right)=w\\
p^{\sigma^*}(\tx)=1 & \implies v(\tx)\exp\left(-\lambda p^{\sigma^*}(\tx)\right)\geq w
\end{aligned}
\end{cases},
\]
\end{lemma}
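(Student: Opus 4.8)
The plan is to reduce the equilibrium condition to the first-order optimality conditions of a linear program over the polytope of achievable marginal probability vectors. The first step is to make the deviation payoff explicit. Since all miners other than $i$ play the common strategy $\sigma^*$, and, conditional on $i$ mining, the number of competing blocks is $\mathrm{Poisson}(\lambda)$ with each such block independently containing $\tx$ with marginal probability $p^{\sigma^*}(\tx)$, I would compute the competition probability by summing against the Poisson law:
\[
\Pr\left[\nexists j\in\Gamma\text{ s.t. }\tx\in B_j\cap j\neq i\right]=\sum_{n=0}^{\infty}e^{-\lambda}\frac{\lambda^n}{n!}\left(1-p^{\sigma^*}(\tx)\right)^n=\exp\left(-\lambda p^{\sigma^*}(\tx)\right),
\]
which is the probability-generating-function (Poisson thinning) identity. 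Crucially, this quantity concerns only blocks $j\neq i$, so it is independent of miner $i$'s own deviation. Substituting into \Cref{def:utility} gives
\[
u_i(\sigma_i,\sigma^*_{-i})=\sum_{\tx\in M}p^{\sigma_i}(\tx)\,v(\tx)\exp\left(-\lambda p^{\sigma^*}(\tx)\right),
\]
so the deviation payoff is \emph{linear} in the marginal vector $p^{\sigma_i}$ with fixed coefficients $w(\tx):=v(\tx)\exp(-\lambda p^{\sigma^*}(\tx))$.

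Next I would pin down the feasible set of the best-response problem. Every pure strategy selects exactly $k$ transactions, so any mixed strategy $\sigma_i$ yields a marginal vector with $0\le p^{\sigma_i}(\tx)\le 1$ and $\sum_{\tx}p^{\sigma_i}(\tx)=k$; conversely, \Cref{lemma:marginal} shows every such vector is realized by some $\sigma_i\in\Delta\mathcal{S}$. Hence the best response reduces to maximizing $\sum_{\tx}p(\tx)w(\tx)$ over the polytope $P=\{p:0\le p(\tx)\le 1,\ \sum_{\tx}p(\tx)=k\}$. Because $u_i(\sigma_i,\sigma^*_{-i})$ depends on $\sigma_i$ only through its marginal vector, $\sigma^*$ is a best response to itself if and only if $p^{\sigma^*}$ maximizes this linear objective over $P$.

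Finally I would verify that the three-case hypothesis is exactly complementary slackness for this program, with $w$ the multiplier of the constraint $\sum p=k$, and that it is \emph{sufficient} for global optimality. For an arbitrary feasible $p\in P$ I would write
\[
\sum_{\tx}\bigl(p(\tx)-p^{\sigma^*}(\tx)\bigr)w(\tx)=\sum_{\tx}\bigl(p(\tx)-p^{\sigma^*}(\tx)\bigr)\bigl(w(\tx)-w\bigr)+w\sum_{\tx}\bigl(p(\tx)-p^{\sigma^*}(\tx)\bigr),
\]
where the last sum vanishes because both vectors sum to $k$. Checking the three cases termwise—$p^{\sigma^*}(\tx)=0$ forces $w(\tx)-w\le 0$ against $p(\tx)-p^{\sigma^*}(\tx)\ge 0$; the interior case forces $w(\tx)-w=0$; and $p^{\sigma^*}(\tx)=1$ forces $w(\tx)-w\ge 0$ against $p(\tx)-p^{\sigma^*}(\tx)\le 0$—shows each summand is nonpositive, so the whole expression is $\le 0$. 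Thus $u_i(\sigma_i,\sigma^*_{-i})\le u_i(\sigma^*,\sigma^*_{-i})$ for every deviation, which is precisely the equilibrium inequality. I expect the main obstacle to be the first step rather than the optimization argument: one must justify carefully, in the continuum-of-miners model and conditional on miner $i$ mining, that the competing blocks are independent and Poisson-distributed with the correct thinning parameter, so that the clean exponential coefficient emerges. Once the payoff is linear in the marginals and the achievable marginals are identified via \Cref{lemma:marginal} with the full polytope $P$, the remaining sufficiency argument is a routine complementary-slackness check.
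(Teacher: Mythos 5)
Your proposal is correct and follows essentially the same route as the paper's own proof: compute the competition probability via the Poisson expectation $\mathrm{E}_{\gamma}\left(1-p^{\sigma^*}(\tx)\right)^{\gamma}=\exp\left(-\lambda p^{\sigma^*}(\tx)\right)$, observe that the deviation payoff is linear in the deviator's marginals with coefficients $v(\tx)\exp\left(-\lambda p^{\sigma^*}(\tx)\right)$, and then use $\sum_{\tx}p(\tx)=k$ together with the three sign conditions to bound any deviation. The only difference is presentational: you package the final step as an explicit complementary-slackness decomposition $\sum_{\tx}\left(p(\tx)-p^{\sigma^*}(\tx)\right)\left(\tilde{v}(\tx)-w\right)\leq 0$, whereas the paper writes out the same arithmetic as a longhand chain of inequalities.
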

where $p^{\sigma}(\cdot)$ has been defined in \Cref{eq:marginal}. Note that $w$ is the threshold of the expected gas fee for each transaction.
\begin{proof}
Recall that $\Delta\mathcal{S}$ denotes the set of all possible mixed strategies for a miner, and $U(\sigma_i,\sigma_{-i})$ denotes the miner $i$'s utility for mining a block when her strategy is $\sigma_i$ and the rest miners' strategies are $\sigma_{-i}$.
We use $\tilde{v}(\tx)$ as shorthand for $v(\tx)\times \exp\left(-\lambda p^{\sigma^*}(\tx)\right)$ and suppose that strategy $\sigma^*$ satisfies the premise of the lemma. Afterwards, we can obtain that, for all $i\in [0,1]$, for all $\sigma\in\Delta\mathcal{S}$,
\begin{small}\begin{align*}
u_i(\sigma^*,\sigma^*_{-i}) = & \sum_{\tx\in M}p^{\sigma^*}(\tx)v(\tx)\times\Pr_{B_{-i}\sim \sigma^*}\left[\nexists j\in\Gamma\text{ s.t. }\tx\in B_j\cap j\neq i\right]\tag{\Cref{def:utility}}\\
= & \sum_{\tx\in M} p^{\sigma^*}(\tx)v(\tx)\times \mathrm{E}_{\gamma\sim \mathrm{Poisson}(\lambda)}\left(1-p^{\sigma^*}(\tx)\right)^\gamma\\
= & \sum_{\tx\in M} p^{\sigma^*}(\tx)v(\tx)\times \exp\left(-\lambda p^{\sigma^*}(\tx)\right)\\
= & \sum_{\tx\in M} p^{\sigma^*}(\tx)\tilde{v}(\tx)\\
= & \sum_{\tx\in M,0\leq p^{\sigma^*}(\tx)<1} p^{\sigma^*}(\tx)\tilde{v}(\tx) + \sum_{\tx\in M,p^{\sigma^*}(\tx)=1} \tilde{v}(\tx)\\
= & \sum_{\tx\in M,0\leq p^{\sigma^*}(\tx)<1} p^{\sigma^*}(\tx)w + \sum_{\tx\in M,p^{\sigma^*}(\tx)=1} \tilde{v}(\tx)\\
= & \sum_{\tx\in M,0\leq p^{\sigma^*}(\tx)\leq 1} p^{\sigma^*}(\tx)w + \sum_{\tx\in M,p^{\sigma^*}(\tx)=1} (\tilde{v}(\tx)-w)\\
= & kv + \sum_{\tx\in M,p^{\sigma^*}(\tx)=1} (\tilde{v}(\tx)-w)\\
\geq & \sum_{\tx\in M,0\leq p^{\sigma^*}(\tx)\leq 1} p^\sigma(\tx)w + \sum_{\tx\in M,p^{\sigma^*}(\tx)=1} p^\sigma(\tx)(\tilde{v}(\tx)-w)\\
= & \sum_{\tx\in M,0\leq p^{\sigma^*}(\tx)<1} p^\sigma(\tx)w + \sum_{\tx\in M,p^{\sigma^*}(\tx)=1} p^\sigma(\tx)\tilde{v}(\tx)\\
\geq & \sum_{\tx\in M,p^{\sigma^*}(\tx)=0} p^\sigma(\tx)\tilde{v}(\tx) + \sum_{\tx\in M,0<p^{\sigma^*}(\tx)<1} p^\sigma(\tx)\tilde{v}(\tx) + \sum_{\tx\in M,p^{\sigma^*}(\tx)=1} p^\sigma(\tx)\tilde{v}(\tx)\\
= & \sum_{\tx\in M} p^\sigma(\tx)\tilde{v}(\tx)\\
= & \sum_{\tx\in M}p^\sigma(\tx)v(\tx)\times\Pr_{B_{-i}\sim \sigma^*}\left[\nexists j\in\Gamma\text{ s.t. }\tx\in B_j\cap j\neq i\right]\\
= & u_i(\sigma,\sigma^*_{-i}).
\end{align*}\end{small}
By definition of Nash equilibrium, a strategy $\sigma^*$ is an equilibrium strategy if and only if $\forall i\in [0,1]\forall \sigma\in\Delta\mathcal{S}, u_i(\sigma^*,\sigma^*)\geq u_i(\sigma,\sigma^*)$. So, strategy $\sigma^*$ is an equilibrium strategy.

\end{proof}

\begin{lemma}[Weak equilibrium strategy]\label{lemma:weak}
For a mempool $M$, define function $\hat{p}(\cdot):M\rightarrow\mathbb{R}$ as
\begin{equation}\label{eq:phat}
\hat{p}(\tx)= \frac{1}{|M|}\left(k+\sum_{\tx'\in M} \frac{\ln v(\tx)-\ln v(\tx')}{\lambda}\right),
\end{equation}
where $\tx'$ enumerates every order in the mempool $M$. If $\hat{p}(\cdot)$ satisfies that $\forall \tx\in M, 0\leq \hat{p}(\tx)\leq 1$, then the corresponding strategy of function $\hat{p}(\cdot)$ is an equilibrium strategy. Moreover, 
\[
\forall\tx\in M, v(\tx)\exp\left(-\lambda \hat{p}(\tx)\right)=\exp\left(-\frac{\lambda k}{|M|}\right)\exp\left(\frac{1}{|M|}\sum_{\tx'\in M}\ln v(\tx')\right).
\]
\end{lemma}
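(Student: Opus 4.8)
The plan is to reduce the claim to the sufficient condition already established in \Cref{lemma:eqcond}, by showing that the profile $\hat{p}(\cdot)$ equalizes the quantity $v(\tx)\exp(-\lambda\hat{p}(\tx))$ across all transactions. The natural order is: first establish the ``moreover'' identity (which simultaneously exhibits the common value that will serve as the threshold $w$), then verify that $\hat{p}$ is a legitimate marginal-probability profile so that \Cref{lemma:marginal} produces a corresponding strategy $\sigma^*$ with $p^{\sigma^*}=\hat{p}$, and finally invoke \Cref{lemma:eqcond}.

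First I would prove the displayed identity by direct algebra. Multiplying the definition of $\hat{p}(\tx)$ by $\lambda$ and expanding the inner sum via $\sum_{\tx'\in M}(\ln v(\tx)-\ln v(\tx')) = |M|\ln v(\tx) - \sum_{\tx'\in M}\ln v(\tx')$ gives $\lambda\hat{p}(\tx) = \frac{\lambda k}{|M|} + \ln v(\tx) - \frac{1}{|M|}\sum_{\tx'\in M}\ln v(\tx')$. Exponentiating $-\lambda\hat{p}(\tx)$ and multiplying back by $v(\tx)$ cancels the $\ln v(\tx)$ term, leaving exactly $\exp(-\frac{\lambda k}{|M|})\exp(\frac{1}{|M|}\sum_{\tx'\in M}\ln v(\tx'))$. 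The decisive observation is that this value does not depend on $\tx$; I would name it $w$.

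Next I would check the two hypotheses of \Cref{lemma:marginal}. Condition (1), that $0\le\hat{p}(\tx)\le 1$, is precisely the standing assumption of the present lemma. Condition (2), that $\sum_{\tx\in M}\hat{p}(\tx)=k$, follows because summing the constant term $k/|M|$ over the $|M|$ transactions returns $k$, while the double sum $\sum_{\tx\in M}\sum_{\tx'\in M}(\ln v(\tx)-\ln v(\tx'))$ vanishes by antisymmetry under interchanging $\tx$ and $\tx'$. \Cref{lemma:marginal} then yields a mixed strategy $\sigma^*$ whose marginals satisfy $p^{\sigma^*}(\tx)=\hat{p}(\tx)$ for every $\tx$.

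Finally I would apply \Cref{lemma:eqcond} with the threshold $w$ obtained above. Since $v(\tx)\exp(-\lambda p^{\sigma^*}(\tx)) = v(\tx)\exp(-\lambda\hat{p}(\tx)) = w$ for every $\tx$, all three case conditions of \Cref{lemma:eqcond} hold with equality, so its hypothesis is met and $\sigma^*$ is an equilibrium strategy. I do not expect a substantial obstacle here: the computation is routine, and the only point needing care is the vanishing of the antisymmetric double sum that secures $\sum_{\tx}\hat{p}(\tx)=k$. The genuine conceptual crux, rather than a difficulty, is recognizing that $\hat{p}$ is exactly the profile that flattens $v(\tx)\exp(-\lambda\hat{p}(\tx))$ into a transaction-independent constant, which is what collapses the three-way case analysis of \Cref{lemma:eqcond} into a single equality and makes the threshold explicit.
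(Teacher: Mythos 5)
Your proposal is correct and follows essentially the same route as the paper's own proof: verify the two hypotheses of \Cref{lemma:marginal} (the $[0,1]$ bound by assumption, the sum-to-$k$ condition by cancellation of the antisymmetric double sum), show by direct algebra that $v(\tx)\exp\left(-\lambda\hat{p}(\tx)\right)$ is a transaction-independent constant, and feed that constant as the threshold $w$ into \Cref{lemma:eqcond}. The only differences are cosmetic — you establish the identity before the marginal-probability check and argue the double-sum cancellation by antisymmetry rather than by the paper's term-by-term expansion.
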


\begin{proof}
First, we need to verify that $\hat{p}(\cdot)$ corresponds to a mixed strategy. By \Cref{lemma:marginal}, $\hat{p}(\cdot)$ should satisfy that $\forall \tx\in M, 0\leq \hat{p}(\tx)\leq 1$ and $\sum_{\tx\in M} \hat{p}(\tx) = k$. The first requirement exists in the suppose, so we only need to check the second.
\begin{small}\begin{align*}
\sum_{\tx\in M} \hat{p}(\tx) = & \sum_{\tx\in M} \frac{1}{|M|}\left(k+\sum_{\tx'\in M} \frac{\ln v(\tx)-\ln v(\tx')}{\lambda}\right)\\
= & \left(\sum_{\tx\in M}\frac{1}{|M|}k\right)+\frac{1}{|M|}\sum_{\tx\in M} \sum_{\tx'\in M} \frac{\ln v(\tx)-\ln v(\tx')}{\lambda}\\
= & k + \frac{1}{|M|}\sum_{\tx\in M} \left(\frac{|M|\ln v(\tx)}{\lambda}\sum_{\tx'\in M} \frac{-\ln v(\tx')}{\lambda}\right)\\
= & k + \frac{1}{|M|}\left(\left(\sum_{\tx\in M} \frac{|M|\ln v(\tx)}{\lambda}\right)-\left(|M|\sum_{\tx'\in M} \frac{-\ln v(\tx')}{\lambda}\right)\right)\\
= & k.
\end{align*}\end{small}

Now, we can briefly verify whether $\hat{p}(\cdot)$ fulfills the requirement of \Cref{lemma:eqcond} to complete the proof. In fact, we will show that $v(\tx)\exp\left(-\lambda \hat{p}(\tx)\right)$ is exactly the same for each $\tx\in M$:
\begin{small}\begin{align*}
v(\tx)\exp\left(-\lambda \hat{p}(\tx)\right) = & v(\tx)\exp\left(- \frac{\lambda}{|M|}\left(k+\sum_{\tx'\in M} \frac{\ln v(\tx)-\ln v(\tx')}{\lambda}\right)\right)\\
= & v(\tx)\exp\left(-\frac{\lambda k}{|M|}\right)\exp\left(-\frac{\lambda}{|M|}\sum_{\tx'\in M} \frac{\ln v(\tx)-\ln v(\tx')}{\lambda}\right)\\
= & v(\tx)\exp\left(-\frac{\lambda k}{|M|}\right)\exp\left(-\frac{1}{|M|}\left(|M|\ln v(\tx)-\sum_{\tx'\in M}\ln v(\tx')\right)\right)\\
= & v(\tx)\exp\left(-\frac{\lambda k}{|M|}\right)\exp\left(-\ln v(\tx)+\frac{1}{|M|}\sum_{\tx'\in M}\ln v(\tx')\right)\\
= & v(\tx)\exp\left(-\frac{\lambda k}{|M|}\right)\frac{1}{v(\tx)}\exp\left(\frac{1}{|M|}\sum_{\tx'\in M}\ln v(\tx')\right)\\
= & \exp\left(-\frac{\lambda k}{|M|}\right)\exp\left(\frac{1}{|M|}\sum_{\tx'\in M}\ln v(\tx')\right).
\end{align*}\end{small}
It can be observed that the right-hand side of the formula is a constant related to $M$ but not to $\tx$. By \Cref{lemma:eqcond}, the corresponding strategy of function $\hat{p}(\cdot)$ is an equilibrium strategy.
\end{proof}

Given that \Cref{lemma:weak} implies that the function $\hat{p}(\cdot)$ can be a marginal probability $p^{\sigma^*}(\cdot)$ of the equilibrium strategy $\sigma$ if and only if the $\forall\tx\in M, \hat{p}(\tx)\in [0,1]$. Namely, $\hat{p}(\cdot)$ can generate unreasonable results beyond the interval $[0,1]$, that is, we cannot always obtain an equilibrium strategy for every given mempool $M$. To fix this problem, we push the frontier of the \Cref{lemma:weak} by proposing the following theorem.

\begin{theorem}[Equilibrium strategy]\label{th:algeq}
Define a class of functions $p_x(\cdot):M\rightarrow\mathbb{R}$ with the subscript $x\in\mathbb{R}$ as
\[
p_x(\tx)=\begin{cases}
0 & \hat{p}(\tx)\leq x\\
\hat{p}(\tx)-x & x<\hat{p}(\tx)<x+1\\
1 & \hat{p}(\tx)\geq x+1
\end{cases},
\] where function $\hat{p}(\cdot)$ is defined in \Cref{lemma:weak}. Let $\hat{x}\in\mathbb{R}$ to be the smallest\footnote{The solution of the equation may not be unique, yet any solution $\hat{x}$ will lead to the same function $p_x(\cdot)$.} solution of equation $\sum_{\tx\in M}\min(\max(\hat{p}(\tx)-x,0),1)=k$. Then, function $p_{\hat{x}}(\cdot)$ satisfies that (1) $\forall\tx\in M, p_{\hat{x}}(\tx)\in [0,1]$; and (2) $\sum_{\tx\in M}p_{\hat{x}}(\tx)=k$. The corresponding strategy $\sigma$ of $p_{\hat{x}}(\cdot)$ is an equilibrium strategy\footnote{We can calculate the solution $\hat{x}$ efficiently. See detail in \Cref{sec:algcalcx}.}.
\end{theorem}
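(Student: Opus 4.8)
The plan is to reduce everything to the sufficient condition already established in \Cref{lemma:eqcond}, exploiting the constant-value identity from \Cref{lemma:weak}. I would organize the argument into three parts: existence of the clipping level $\hat{x}$, the two feasibility claims (1)--(2), and the equilibrium claim.

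First, for existence of $\hat{x}$, consider $g(x)=\sum_{\tx\in M}\min(\max(\hat{p}(\tx)-x,0),1)$. Each summand is continuous and non-increasing in $x$, so $g$ is continuous and non-increasing. As $x\to-\infty$ every summand tends to $1$, giving $g(x)\to|M|\geq k$, and as $x\to+\infty$ every summand tends to $0$, giving $g(x)\to 0\leq k$. By the intermediate value theorem the level set $g^{-1}(k)$ is nonempty; since $g$ is continuous this set is closed, and in the generic case $|M|>k$ it is bounded below, so its smallest element $\hat{x}$ is well defined. Claims (1) and (2) are then immediate: by construction $p_{\hat{x}}(\tx)=\min(\max(\hat{p}(\tx)-\hat{x},0),1)\in[0,1]$, and $\sum_{\tx\in M}p_{\hat{x}}(\tx)=g(\hat{x})=k$. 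In particular \Cref{lemma:marginal} applies, so $p_{\hat{x}}(\cdot)$ corresponds to a genuine mixed strategy $\sigma$.

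For the equilibrium claim I would verify the premise of \Cref{lemma:eqcond} for $\sigma$. The crucial input is the identity proved in \Cref{lemma:weak}, namely that $v(\tx)\exp(-\lambda\hat{p}(\tx))$ equals a constant $C$ independent of $\tx$; equivalently $v(\tx)=C\exp(\lambda\hat{p}(\tx))$. Writing $\tilde{v}(\tx)=v(\tx)\exp(-\lambda p_{\hat{x}}(\tx))$ and using $\lambda>0$ (so $\exp(\lambda\cdot)$ is strictly increasing), I would compute $\tilde{v}$ in the three regimes of the definition of $p_{\hat{x}}$. When $\hat{p}(\tx)\leq\hat{x}$ we have $p_{\hat{x}}(\tx)=0$ and $\tilde{v}(\tx)=v(\tx)=C\exp(\lambda\hat{p}(\tx))\leq C\exp(\lambda\hat{x})$; when $\hat{x}<\hat{p}(\tx)<\hat{x}+1$ we have $p_{\hat{x}}(\tx)=\hat{p}(\tx)-\hat{x}$ and $\tilde{v}(\tx)=C\exp(\lambda\hat{x})$ exactly; and when $\hat{p}(\tx)\geq\hat{x}+1$ we have $p_{\hat{x}}(\tx)=1$ and $\tilde{v}(\tx)=C\exp(\lambda(\hat{p}(\tx)-1))\geq C\exp(\lambda\hat{x})$.

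Setting the threshold $w=C\exp(\lambda\hat{x})$, these three computations are precisely the three implications required by \Cref{lemma:eqcond}: $p_{\hat{x}}(\tx)=0\Rightarrow\tilde{v}(\tx)\leq w$, $0<p_{\hat{x}}(\tx)<1\Rightarrow\tilde{v}(\tx)=w$, and $p_{\hat{x}}(\tx)=1\Rightarrow\tilde{v}(\tx)\geq w$; hence $\sigma$ is an equilibrium strategy. I expect the main obstacle to be conceptual rather than computational: recognizing that the clipping in $p_x$ interacts with the constant-value identity so that the flat middle regime pins $\tilde{v}$ to a single threshold while the two saturated regimes land on the correct side of it — once the candidate $w=C\exp(\lambda\hat{x})$ is identified, each case reduces to monotonicity of $\exp$. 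A secondary subtlety worth checking is the well-definedness of the smallest solution and the boundary edge cases ($|M|=k$, or ties where $\hat{p}(\tx)$ equals $\hat{x}$ or $\hat{x}+1$), which the footnote's remark that $p_x(\cdot)$ is independent of the choice of solution should resolve.
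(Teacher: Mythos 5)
Your proposal is correct and follows essentially the same route as the paper's own proof: existence of $\hat{x}$ via monotonicity, continuity, and the limits $|M|\geq k$ and $0$; feasibility claims (1)--(2) by construction; and then verification of the sufficient condition in \Cref{lemma:eqcond} using the constant-value identity of \Cref{lemma:weak}, with the threshold $w$ equal to that constant shifted by the clipping level. In fact your expression $w=C\exp(\lambda\hat{x})$ has the correct sign where the paper's displayed computation carries a harmless typo ($\exp(-\lambda\hat{x})$ after expanding $\exp(-\lambda(\hat{p}(\tx)-\hat{x}))$), and your use of weak inequalities in the saturated cases is the precise form that \Cref{lemma:eqcond} requires.
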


\begin{example}\label{eg:findx}
\Cref{fig:findxhat} presents the procedure of the calculating the marginal probability $p_{\hat{x}}(\cdot)$ in detail. Specifically, in mempool $M$, there are $7$ transactions, and the block capacity $k=3$. The first column of \Cref{table:corres} is the transaction id. The second column is each transaction's gas price $v(\cdot)$. The third column is the $\hat{p}(\cdot)$ value calculated according to \Cref{lemma:weak} and the fourth column is the marginal probability $p_{\hat{x}}(\cdot)$ after we solve the equation $\sum_{\tx\in M}\min(\max(\hat{p}(\tx)-x,0),1)=k$ to obtain $\hat{x}=\frac{1}{3}$.
\end{example}

\begin{figure}[htbp]
\centering
    \begin{minipage}{0.63\linewidth}
    \centering
    \includegraphics[scale=.43]{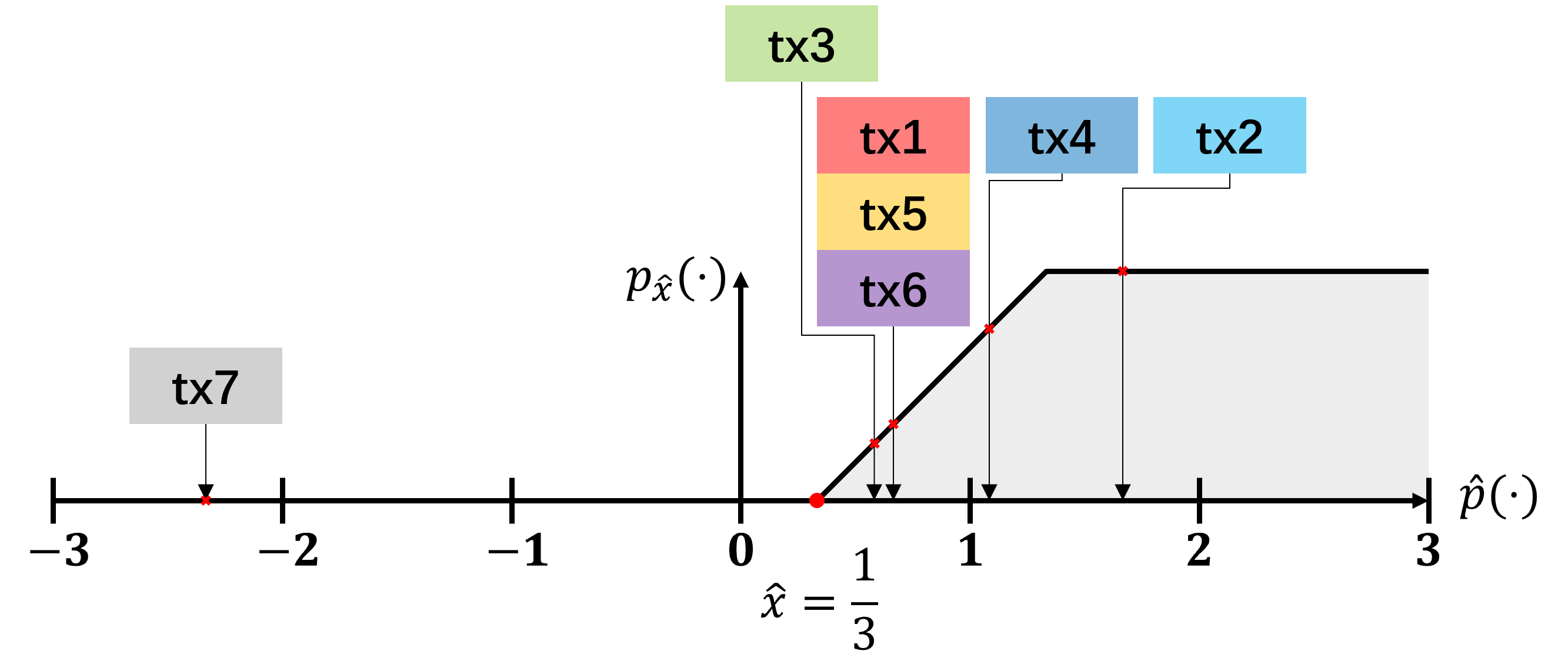}
    \caption{Calculating the marginal probability $p_{\hat{x}}(\cdot)$}
    \label{fig:findxhat}
    \end{minipage}
\hfill
    \begin{minipage}{0.36\linewidth}
    \centering\setlength{\tabcolsep}{4pt}
    \renewcommand\arraystretch{1.1}
    \begin{tabular}{|c|c|c|c|}
        \hline
        $\tx$ & $v(\tx)$ & $\hat{p}(\tx)$ & $p_{\hat{x}}(\tx)$\\\hline\hline
        1 & $1$ & $ \frac{2}{3}$ & $ \frac{1}{3}$\\\hline
        2 & $\exp\left(1\right)$ & $\frac{5}{3}$ & $1$\\\hline
        3 & $\exp\left(-\frac{1}{12}\right)$ & $ \frac{7}{12}$ & $\frac{1}{4}$\\\hline
        4 & $\exp\left(\frac{5}{12}\right)$ & $ \frac{13}{12}$ & $\frac{3}{4}$\\\hline
        5 & $1$ & $ \frac{2}{3}$ & $ \frac{1}{3}$\\\hline
        6 & $1$ & $ \frac{2}{3}$ & $ \frac{1}{3}$ \\\hline
        7 & $\exp(-3)$ & $ -\frac{7}{3}$ & $ 0$\\\hline
    \end{tabular}
    \captionof{table}{Data table of \Cref{eg:findx}}
    \label{table:weak}
    \end{minipage}
\end{figure}

\begin{proof}
The proof contains three parts. In the first part of the proof, we will show that the equation $\sum_{\tx\in M}\min(\max(\hat{p}(\tx)-x,0),1)=k$ always has a solution and the method to calculate it. Because the term $\min(\max(\hat{p}(\tx)-x,0),1)$ is continuous and non-increasing with $x$ for all $\tx\in M$, we have that the term $\sum_{\tx\in M}\min(\max(\hat{p}(\tx)-x,0),1)$ is continuous and non-increasing with $x$. According to the facts that 
\[
\begin{cases}
\lim_{x\rightarrow -\infty} \sum_{\tx\in M}\min(\max(\hat{p}(\tx)-x,0),1)=|M|\geq k\\
\lim_{x\rightarrow +\infty} \sum_{\tx\in M}\min(\max(\hat{p}(\tx)-x,0),1)=0
\end{cases},
\]
we can assert that the equation $\sum_{\tx\in M}\min(\max(\hat{p}(\tx)-x,0),1)=k$ has a solution. To calculate the solution $\hat{x}$, we can use a standard binary search algorithm to obtain the time complexity of $O(|M|\log |M|)$. The pseudo-code of the binary search algorithm is shown in \Cref{sec:algcalcx}.

The second part is to show that function $p_{\hat{x}}(\cdot)$ satisfies that (1) $\forall\tx\in M, p_{\hat{x}}(\tx)\in [0,1]$; and (2) $\sum_{\tx\in M}p_{\hat{x}}(\tx)=k$. Both of these facts are clear. Since the values of the three cases in the definition of the function $p_x(\cdot)$ are all in the interval $[0,1]$, the first requirement is satisfied. Because $\hat{x}$ is a solution of equation $\sum_{\tx\in M}\min(\max(\hat{p}(\tx)-x,0),1)=k$, the second requirement is satisfied. Thus, function $p_{\hat{x}}(\cdot)$ is the marginal probability of a mixed strategy $\sigma$.

Finally, we will prove that the corresponding strategy $\sigma$ of the function $p_{\hat{x}}(\cdot)$ satisfies the condition of \Cref{lemma:eqcond}. In terms of \Cref{lemma:eqcond} and \Cref{lemma:weak}, we set
\[
w = \exp\left(-\frac{\lambda k}{|M|}\right)\exp\left(\frac{1}{|M|}\sum_{\tx'\in M}\ln v(\tx')\right)\exp(-\lambda \hat{x})
\] and obtain that for all $\tx\in M$,
\begin{align*}
& v(\tx)\exp\left(-\lambda \left(\hat{p}(\tx)-x\right)\right)\\
= & v(\tx)\exp\left(-\lambda \hat{p}(\tx)\right)\exp(-\lambda x)\\
= & \exp\left(-\frac{\lambda k}{|M|}\right)\exp\left(\frac{1}{|M|}\sum_{\tx'\in M}\ln v(\tx')\right)\exp(-\lambda \hat{x})\tag{\Cref{lemma:weak}}\\
= & w.
\end{align*}
Therefore, we derive
\[
\begin{cases}
p_{\hat{x}}(\tx)=0 \implies v(\tx)\exp\left(-\lambda p_{\hat{x}}(\tx)\right)<v(\tx)\exp\left(-\lambda \left(\hat{p}(\tx)-\hat{x}\right)\right)=w\\
0<p_{\hat{x}}(\tx)<1 \implies v(\tx)\exp\left(-\lambda p_{\hat{x}}(\tx)\right)=v(\tx)\exp\left(-\lambda \left(\hat{p}(\tx)-\hat{x}\right)\right)=w\\
p_{\hat{x}}(\tx)=1 \implies v(\tx)\exp\left(-\lambda p_{\hat{x}}(\tx)\right)>v(\tx)\exp\left(-\lambda \left(\hat{p}(\tx)-\hat{x}\right)\right)=w,
\end{cases}
\] thereby meeting the condition of \Cref{lemma:eqcond}. So, the corresponding strategy $\sigma$ of function $p_{\hat{x}}(\cdot)$ is an equilibrium strategy.

\end{proof}

We can directly obtain the main theorem according to \Cref{lemma:marginal} and \Cref{th:algeq}.
\section{Discussions}\label{sec:discuss}
\subsection{Applications and Extensions}\label{sec:realworld}
In \Cref{sec:solution}, we discuss the equilibrium strategy in our model of a high-throughput blockchain. In this section, we focus on the extensions for \Cref{lemma:marginal} and \Cref{thm:main} to obtain generalized transaction packaging strategies for blockchains in practice. In our model, we assume that $s(\tx)=1$ for all transitions. By contrast, different transactions use various block capacities in practice. For instance, a block may contain thousands of token transfers, yet it may be occupied by a small number of transactions with nested calls of specifications in smart contracts. 

Recall that our model assumes every transaction has the same block capacity usage. Without this assumption, \Cref{lemma:marginal} can no longer hold. Therefore, we can empirically adopt the following randomization method instead of \Cref{alg:corresponding} in \Cref{lemma:marginal}: 

First, take a block capacity $k'$ slightly below the limit $k$, and then calculate a marginal probability $p(\cdot)$ of transaction selection, such that $\sum_{\tx\in M} s(\tx)p(\tx)=k'$. Finally, continue to randomly select each transaction $\tx$ according to the probability $p(\tx)$ independently until the selected transaction set $D$ satisfies $k-2k'\leq\sum_{\tx\in D} s(\tx)\leq k$.

After solving the difficulty of converting the marginal probability of transaction selection into a packing strategy, we only need to introduce $s(\tx)$ into \Cref{th:algeq}. Here, we propose the following amended version of \Cref{th:algeq} and attach the algorithm and proof in \Cref{sec:proofreal}.

\begin{restatable}{theorem}{thmainreal}\label{th:mainreal}
For a mempool $M$, define function $p_x(\cdot):M\rightarrow\mathbb{R}$ with the subscript $x$ as
\[
p_x^{\mathrm{real}}(\tx)=\begin{cases}
0 & \hat{p}^{\mathrm{real}}(\tx)\leq x\\
\hat{p}^{\mathrm{real}}(\tx)-x & x<\hat{p}^{\mathrm{real}}(\tx)<x+1\\
1 & \hat{p}^{\mathrm{real}}(\tx)\geq x+1
\end{cases},
\]where\footnote{Enumerators $\tx''$ and $\tx''$ enumerates every order in the mempool $M$.}
\begin{equation}\label{eq:phatreal}
\hat{p}^{\mathrm{real}}(\tx)= \frac{1}{\sum_{\tx'\in M} s(\tx')}\left(k+\sum_{\tx''\in M} s(\tx'')\frac{\ln v(\tx)-\ln v(\tx'')}{\lambda}\right).
\end{equation}
Let $\hat{x}$ to be the smallest solution of equation $\sum_{\tx\in M}\min(\max(\hat{p}^{\mathrm{real}}(\tx)-x,0),1)=k$. Then, function $p_{\hat{x}}(\cdot)$ satisfies that (1) $\forall\tx\in M, p_{\hat{x}}^{\mathrm{real}}(\tx)\in [0,1]$; and (2) $\sum_{\tx\in M}p_{\hat{x}}^{\mathrm{real}}(\tx)s(\tx)=k$. If there is a mixed strategy $\sigma$ such that $\forall \tx\in M,p^\sigma(\tx)=p(\tx)$, then $\sigma$ is an equilibrium strategy\footnote{In practice, we use $k'$ instead of $k$ and obtain an approximate equilibrium.}.
\end{restatable}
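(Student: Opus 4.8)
The plan is to reuse the three-part architecture of the proof of \Cref{th:algeq}, after first upgrading the best-response characterization of \Cref{lemma:eqcond} to the setting in which transactions occupy heterogeneous block capacity. Once $s(\tx)$ is no longer constant, the conditional utility of a miner who plays own marginals $q(\cdot)$ against the profile $\sigma^*$ becomes the linear functional $\sum_{\tx\in M} q(\tx)\,v(\tx)\,s(\tx)\exp(-\lambda p^{\sigma^*}(\tx))$, to be maximized over the capacity-budgeted polytope $\{q: \sum_{\tx\in M} s(\tx)q(\tx)=k,\ 0\le q(\tx)\le 1\}$. The key observation is that the return per unit of capacity spent on $\tx$ is $v(\tx)s(\tx)\exp(-\lambda p^{\sigma^*}(\tx))/s(\tx)=v(\tx)\exp(-\lambda p^{\sigma^*}(\tx))$, so the capacity factor cancels and the ranking statistic is exactly the one appearing in \Cref{lemma:eqcond}. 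I would therefore first prove a weighted analogue of \Cref{lemma:eqcond}: replaying the same exchange/telescoping chain of inequalities, but with the budget weighted by $s(\tx)$, shows that the identical three-case threshold condition on $v(\tx)\exp(-\lambda p^{\sigma^*}(\tx))$ (with the threshold $w$ now read per unit of capacity) is sufficient for $\sigma^*$ to be a best response.

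Next I would verify that $\hat{p}^{\mathrm{real}}$ plays the role that $\hat{p}$ plays in \Cref{lemma:weak}. Writing $S=\sum_{\tx'\in M}s(\tx')$ and substituting \Cref{eq:phatreal}, a direct computation gives $v(\tx)\exp(-\lambda\hat{p}^{\mathrm{real}}(\tx))=\exp(-\lambda k/S)\exp\!\big(\tfrac{1}{S}\sum_{\tx''\in M}s(\tx'')\ln v(\tx'')\big)$, a constant independent of $\tx$; and collapsing the double sum exactly as in \Cref{lemma:weak} gives the weighted budget identity $\sum_{\tx\in M}s(\tx)\hat{p}^{\mathrm{real}}(\tx)=k$. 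I would note here that, for conclusion (2) to be consistent, the equation defining $\hat{x}$ must carry the same capacity weights, i.e. $\sum_{\tx\in M}s(\tx)\min(\max(\hat{p}^{\mathrm{real}}(\tx)-x,0),1)=k$; only the unit-capacity special case collapses to the unweighted equation written for $\hat{p}$ in \Cref{th:algeq}.

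Existence and computability of $\hat{x}$ then follow essentially verbatim from the first part of the proof of \Cref{th:algeq}: the map $x\mapsto\sum_{\tx\in M}s(\tx)\min(\max(\hat{p}^{\mathrm{real}}(\tx)-x,0),1)$ is continuous and non-increasing, tends to $S\ge k$ as $x\to-\infty$ and to $0$ as $x\to+\infty$, hence attains $k$ by the intermediate value theorem; its smallest root is well defined and a binary search locates it in $O(|M|\log|M|)$. Clamping into $[0,1]$ gives conclusion (1) immediately, and the defining equation gives conclusion (2). The equilibrium verification is then the same case analysis as in \Cref{th:algeq}: setting $w=\exp(-\lambda k/S)\exp(\tfrac{1}{S}\sum_{\tx''\in M}s(\tx'')\ln v(\tx''))\exp(-\lambda\hat{x})$, the interior region where $p_{\hat{x}}^{\mathrm{real}}(\tx)=\hat{p}^{\mathrm{real}}(\tx)-\hat{x}$ yields $v(\tx)\exp(-\lambda p_{\hat{x}}^{\mathrm{real}}(\tx))=w$, while monotonicity of $\exp$ turns the clamped-down case ($p_{\hat{x}}^{\mathrm{real}}(\tx)=0$) into $v(\tx)\le w$ and the clamped-up case ($p_{\hat{x}}^{\mathrm{real}}(\tx)=1$) into $v(\tx)\exp(-\lambda)\ge w$, matching the weighted \Cref{lemma:eqcond}.

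The part I expect to be genuinely delicate is the hypothesis ``if there is a mixed strategy $\sigma$ with $p^\sigma(\tx)=p_{\hat{x}}^{\mathrm{real}}(\tx)$.'' Unlike \Cref{lemma:marginal}, once $s(\tx)$ varies the prescribed marginals need not be exactly realizable, because the feasible packings now satisfy a knapsack constraint and the marginal polytope no longer coincides with the convex hull of the feasible sets that \Cref{alg:corresponding} exploits. This is precisely why the statement is phrased conditionally, and why the surrounding construction retreats to the approximate randomization with a reduced budget $k'<k$ described before the theorem. Consequently the honest remaining work is not the clean threshold argument above but a quantitative bound on how far the approximately realized marginals, and hence the resulting payoffs, can deviate from $p_{\hat{x}}^{\mathrm{real}}$, which controls the size of the $\varepsilon$ in the resulting $\varepsilon$-equilibrium.
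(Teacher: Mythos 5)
Your proposal is correct and follows essentially the same route as the paper's own proof: the same two algebraic identities for $\hat{p}^{\mathrm{real}}$ (the capacity-weighted budget $\sum_{\tx\in M}s(\tx)\hat{p}^{\mathrm{real}}(\tx)=k$ and the constancy of $v(\tx)\exp(-\lambda\hat{p}^{\mathrm{real}}(\tx))$ over $\tx$), the same monotonicity/intermediate-value argument for the existence of $\hat{x}$, and the same three-case verification against \Cref{lemma:eqcond} with the identical threshold $w=\exp(-\lambda k/\mathrm{sum})\exp\bigl(\tfrac{1}{\mathrm{sum}}\sum_{\tx'\in M}s(\tx')\ln v(\tx')\bigr)\exp(-\lambda\hat{x})$. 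Where you differ is in rigor rather than route, and in both places to your advantage: the paper merely asserts that \Cref{lemma:eqcond} ``is still valid under the setting,'' whereas you sketch the weighted exchange argument showing why the per-unit-capacity statistic $v(\tx)\exp(-\lambda p^{\sigma^*}(\tx))$ remains the correct ranking once the budget constraint becomes $\sum_{\tx\in M}s(\tx)p^\sigma(\tx)\leq k$; and you correctly observe that the equation defining $\hat{x}$ must carry the weights $s(\tx)$ for conclusion (2) to be consistent --- the theorem statement omits them, but the paper's own \Cref{alg:calcx} in \Cref{sec:algcalcx} solves exactly the weighted equation $\sum_{j}\min(\max(p[j]-x,0),1)\times s[j]=k$, so this is a typo in the statement that your reading silently repairs. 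Your closing concern about quantifying the $\varepsilon$-equilibrium produced by the $k'<k$ randomization is a genuine open issue for the practical scheme, but it is not needed here: realizability of the marginals is taken as a hypothesis in the theorem, so your proof of the conditional statement is already complete without it.
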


\subsection{Base Fee}
On August 5th, 2021, Ethereum activated a major upgrade called the London hard fork to implement EIP-1559 (and several other EIPs). After EIP-1559 was implemented, the blockchain consensus maintained an additional parameter called base fee. Base fee indicates the minimum gas price an order need to bid in each
block. According to the description in the EIP-1559 proposal, the existence of the base fee can help users better estimate transaction fees, thereby reducing or eliminating the ``mismatch between volatility of transaction fee levels and social cost of transactions''.

EIP-1559 made a qualitative change to Ethereum's transaction fee mechanism in order to provider a base fee \cite{leonardos2021dynamical,roughgarden2021transaction,10.1145/3548606.3559341}. Surprisingly, the transaction packaging strategy provided in this paper can lead to an natural base fee parameter, without any modification to the transaction fee mechanism.

Imagine a virtual transaction $\tx$ with block capacity usage $s(\tx)=0$ and add it to the mempool $M$. Since $s(\tx)=0$, it will not affect $\hat{x}$ and $\hat{p}^{\mathrm{real}}(\cdot)$ values of the remaining transactions (see \Cref{eq:phatreal}). Then, when this virtual transaction $\tx$ satisfies the equation $\hat{p}^{\mathrm{real}}(\tx)=\hat{x}$, its gas price $v_{\mathrm{low}}$ is the minimum gas price that the transaction may be packaged. When this virtual transaction $\tx$ just satisfies the equation $\hat{p}^{\mathrm{real}}(\tx)=\hat{x}+1$, its gas price $v_{\mathrm{high}}$ is the minimum gas price that the transaction must be packaged by any block. We use the notation $\mathrm{sum}$ as the abbreviation of $\sum_{\tx'\in M}s(\tx')$, that the total block capacity usage of all transactions in mempool. By solving the two equations above, we obtain
\begin{align*}
v_{\mathrm{low}} & = \exp\left\{\frac{1}{\mathrm{sum}}\left(-k\lambda+\sum_{\tx'\in M}s(\tx')\ln v(\tx')\right)\right\}\\
v_{\mathrm{high}} & = \exp\left\{\frac{1}{\mathrm{sum}}\left((\mathrm{sum}-k)\lambda+\sum_{\tx'\in M}s(\tx')\ln v(\tx')\right)\right\}.
\end{align*}

The meanings of $v_{\mathrm{low}}$ and $v_{\mathrm{high}}$ are
\begin{itemize}
    \item $v_{\mathrm{low}}$: if an order's gas price is less than $v_{\mathrm{low}}$ then it cannot be included in a block;
    \item $v_{\mathrm{high}}$: if an order's gas price bidding is more than $v_{\mathrm{high}}$, then it will definitely be included in each block (within the network latency).
\end{itemize}
Therefore, $v_{\mathrm{low}}$ can be treated as the base fee like Ethereum. Moreover, $v_{\mathrm{high}}$ provides an upper bound on the gas price for the orders that are eager to be packaged. Specifically, a transaction with gas price $v_{\mathrm{high}}$ can force every potential miner to package it.

\section{Conclusion}\label{sec:conclu}

In this paper, we provide a framework to study the strategic behaviors of the transaction packaging game in high-throughput blockchains. We adopt a tractable two-period blockchain model to formalize high-throughput blockchains, which exhibit the incoordination and misallocation caused by network latency. Specifically, we investigate the effects of transaction costs, coordination, and market equilibrium on the decentralized platform and find that the strategic equilibrium we proposed can provide a more efficient and stable market outcome than the existing method of transaction packaging in high-throughput platforms.

Furthermore, we discuss the implications of our results for the development of decentralized multi-sided markets, and we suggest possible directions for future research. We argue that the decentralized platform has the potential to revolutionize the way markets are coordinated and allocated. We believe our work is an important step towards unlocking the full potential of decentralized multi-sided platforms.

Finally, the theoretical results of this paper also indicate several potential research directions. First, this paper builds up a benchmark model in a static manner, in which the mempool $M$ is static. Intuitively, a dynamic model should be considered in which the mempool $M$ changes over time. Another direction is to focus on the partial observability of mempool $M$ in reality. In this paper, we assume that every miner can observe the whole mempool $M$. However, in reality, each miner can only observe a subset of the mempool due to her network latency. Thus, it is interesting to study the optimal strategy of transaction packaging under partial observability of mempool. As a pioneer in the study of transaction packaging in high-throughput blockchains, we model network latency as consistent. In order to accurately model network latency in high-throughput blockchains, it is reasonable to consider the various factors that can affect the latency of orders and blocks.

\clearpage
\bibliographystyle{ACM-Reference-Format}
\bibliography{ref}
\clearpage

\appendix
\section{Proof of Theorem 4.1}\label{sec:proofreal}

\thmainreal*

\begin{proof}

Recall that $k$ is the total capacity of the block, $v(\tx)$ is the gas price of the transaction $\tx$, and  $s(\tx)$ is the block capacity used by transaction $\tx$. It can be verified that the proof of \Cref{lemma:eqcond} is still valid under the setting. We use the notation $\mathrm{sum}$ as the abbreviation of $\sum_{\tx'\in M}s(\tx')$, that the total block capacity usage of all transactions in mempool. 

Like \Cref{lemma:weak}, we have such two facts about the function $\hat{p}^{\mathrm{real}}(\cdot)$:

\begin{small}\begin{align*}
& \sum_{\tx\in M} s(\tx)\hat{p}^{\mathrm{real}}(\tx)\\
= & \sum_{\tx\in M} \frac{s(\tx)}{\mathrm{sum}}\left(k+\sum_{\tx'\in M} s(\tx')\frac{\ln v(\tx)-\ln v(\tx')}{\lambda}\right)\\
= & \left(\sum_{\tx\in M}\frac{s(\tx)k}{\mathrm{sum}}\right)+\frac{s(\tx)}{\mathrm{sum}}\sum_{\tx\in M} \sum_{\tx'\in M} s(\tx')\frac{\ln v(\tx)-\ln v(\tx')}{\lambda}\\
= & k + \frac{s(\tx)}{\mathrm{sum}}\sum_{\tx\in M} \left(\mathrm{sum}\frac{\ln v(\tx)}{\lambda}\sum_{\tx'\in M} \frac{-(\ln v(\tx'))s(\tx')}{\lambda}\right)\\
= & k + \frac{s(\tx)}{\mathrm{sum}}\left(\left(\sum_{\tx\in M} \frac{\mathrm{sum}\ln v(\tx)}{\lambda}\right)-\left(\mathrm{sum}\sum_{\tx'\in M} \frac{-\ln v(\tx')}{\lambda}\right)\right)\\
= & k,
\end{align*}\end{small}

\begin{small}\begin{align*}
& v(\tx)\exp\left(-\lambda \hat{p}^{\mathrm{real}}(\tx)\right)\\
= & v(\tx)\exp\left(- \frac{\lambda}{\mathrm{sum}}\left(k+\sum_{\tx'\in M} s(\tx')\frac{\ln v(\tx)-\ln v(\tx')}{\lambda}\right)\right)\\
= & v(\tx)\exp\left(-\frac{\lambda k}{\mathrm{sum}}\right)\exp\left(-\frac{\lambda}{\mathrm{sum}}\sum_{\tx'\in M} s(\tx')\frac{\ln v(\tx)-\ln v(\tx')}{\lambda}\right)\\
= & v(\tx)\exp\left(-\frac{\lambda k}{\mathrm{sum}}\right)\exp\left(-\frac{1}{\mathrm{sum}}\left(\mathrm{sum}\ln v(\tx)-\sum_{\tx'\in M}s(\tx')\ln v(\tx')\right)\right)\\
= & v(\tx)\exp\left(-\frac{\lambda k}{\mathrm{sum}}\right)\exp\left(-\ln v(\tx)+\frac{1}{\mathrm{sum}}\sum_{\tx'\in M}s(\tx')\ln v(\tx')\right)\\
= & v(\tx)\exp\left(-\frac{\lambda k}{\mathrm{sum}}\right)\frac{1}{v(\tx)}\exp\left(\frac{1}{\mathrm{sum}}\sum_{\tx'\in M}s(\tx')\ln v(\tx')\right)\\
= & \exp\left(-\frac{\lambda k}{\mathrm{sum}}\right)\exp\left(\frac{1}{\mathrm{sum}}\sum_{\tx'\in M}s(\tx')\ln v(\tx')\right).
\end{align*}\end{small}

Then, like \Cref{th:algeq}, the rest of the proof has three parts:
\begin{enumerate}
    \item to show the equation $\sum_{\tx\in M}\min(\max(\hat{p}^{\mathrm{real}}(\tx)-x,0),1)=k$ always has a solution;
    \item to show $p_{\hat{x}}^{\mathrm{real}}(\cdot)$ satisfies that (1) $\forall\tx\in M, p_{\hat{x}}^{\mathrm{real}}(\tx)\in [0,1]$; and (2) $\sum_{\tx\in M}p_{\hat{x}}^{\mathrm{real}}(\tx)s(\tx)=k$;
    \item to prove that the corresponding strategy $\sigma$ of function $p_{\hat{x}}^{\mathrm{real}}(\cdot)$ satisfies the condition of \Cref{lemma:eqcond}.
\end{enumerate}

Part (1) and (2) are exactly the same as the proof of \Cref{th:algeq}, so we omit them. In part (3), we set
\[
w = \exp\left(-\frac{\lambda k}{\mathrm{sum}}\right)\exp\left(\frac{1}{\mathrm{sum}}\sum_{\tx'\in M}s(\tx')\ln v(\tx')\right)\exp(-\lambda \hat{x})
\] and obtain that for all $\tx\in M$,
\begin{align*}
& v(\tx)\exp\left(-\lambda \left(\hat{p}^{\mathrm{real}}(\tx)-x\right)\right)\\
= & v(\tx)\exp\left(-\lambda \hat{p}^{\mathrm{real}}(\tx)\right)\exp(-\lambda x)\\
= & \exp\left(-\frac{\lambda k}{\mathrm{sum}}\right)\exp\left(\frac{1}{\mathrm{sum}}\sum_{\tx'\in M}s(\tx')\ln v(\tx')\right)\\
= & w.
\end{align*}
Thus,
\[
\begin{cases}
p_{\hat{x}}^{\mathrm{real}}(\tx)=0 \implies v(\tx)\exp\left(-\lambda p_{\hat{x}}^{\mathrm{real}}(\tx)\right)<v(\tx)\exp\left(-\lambda \left(\hat{p}^{\mathrm{real}}(\tx)-\hat{x}\right)\right)=w\\
0<p_{\hat{x}}^{\mathrm{real}}(\tx)<1 \implies v(\tx)\exp\left(-\lambda p_{\hat{x}}^{\mathrm{real}}(\tx)\right)=v(\tx)\exp\left(-\lambda \left(\hat{p}^{\mathrm{real}}(\tx)-\hat{x}\right)\right)=w\\
p_{\hat{x}}^{\mathrm{real}}(\tx)=1 \implies v(\tx)\exp\left(-\lambda p_{\hat{x}}^{\mathrm{real}}(\tx)\right)>v(\tx)\exp\left(-\lambda \left(\hat{p}^{\mathrm{real}}(\tx)-\hat{x}\right)\right)=w,
\end{cases}
\] thereby meeting the condition of \Cref{lemma:eqcond}. So, if there is an mixed strategy $\sigma$ with $p^{\sigma}(\tx)=p_{\hat{x}}^{\mathrm{real}}(\tx)$ for each $\tx\in M$, then $\sigma$ is an equilibrium strategy.
\end{proof}

\section{Algorithm for Calculating $\hat{x}$}\label{sec:algcalcx}
The goal of the algorithm is to calculate the smallest solution of equation
\[
\sum_{j=1}^m \min(\max(p[j]-x,0),1)\times s[j]=k,
\]
where $p[1..m]$ and $s[1..m]$ are specific non-negative arrays.

\begin{algorithm}[htbp]
    \SetAlgoNoLine
    \KwIn{array $p[1..m]$, array $s[1..m]$, parameter $k$.}
    \KwResult{$\hat{x}$, the smallest solution of equation $\sum_{j=1}^m \min(\max(p[j]-x,0),1)s[j]=k$}
    New array $b[1..2m]$\;
    \For{$j\in\set{1,2,...,m}$}
    {
        $b[j]\gets p[j]$\;
        $b[j+m]\gets p[j]-1$\;
    }
    Sort the $b[]$ array from small to large\;
    Define function $f:\mathbb{R}\rightarrow\mathbb{R}$ as $f(x)=\sum_{j=1}^m \min(\max(p[j]-x,0),1)\times s[j]$\;
    \tcc{$f(\cdot)$ is continuous, non-increasing, and piece-wise linear.}
    \tcc{$f(\cdot)$ is linear in each $[b[j],b[j+1]]$ interval.}
    $l\gets 1$; $r\gets 2m$\;
    \While{$l+1\neq r$}
    {
        $mid\gets \lfloor \frac{l+r}{2}\rfloor$\;
        \uIf{$f(b[mid])>k$}
        {
            $l\gets mid$\;
        }
        \Else
        {
            $r\gets mid$\;
        }
    }
    \tcc{We obtain $f(b[l])> k$ and $f(b[l+1])\leq k$.}
    $pos=\frac{b[l]+b[l+1]}{2}$; $d\gets 0$\;
    \For{$j\in\set{1,2,...,m}$}
    {
        \If{$pos<p[j]<pos+1$}
        {
            $d\gets d-s[j]$
        }
    }
    \tcc{$d$ is the slope of the function $f$ at $(b[l],b[l+1])$.}
    \KwRet{$b[l]+\frac{k-b[l]}{d}$}\;
    \caption{Calculating $\hat{x}$}
    \label{alg:calcx}
\end{algorithm}

\end{document}